\Crefname{observation}{Observation}{Observations}
\Crefname{algorithm}{Algorithm}{Algorithms}
\Crefname{section}{Section}{Sections}
\Crefname{observation}{Observation}{Observations}
\Crefname{lemma}{Lemma}{Lemmas}
\Crefname{claim}{Claim}{Claims}
\Crefname{claimx}{Claim}{Claims}
\Crefname{figure}{Fig.}{Figs.}
\Crefname{enumi}{Property}{Properties}
\Crefname{property}{Property}{Properties}
\spnewtheorem*{sketch}{Sketch of proof}{\itshape}{\rmfamily}
\newcommand{\remove}[1]{}
\newcommand{\N}{\mathcal N_u}
\newcommand{\NB}{\mathcal N}
\newcommand{\B}[1]{\beta(#1)}
\newcommand{\C}[1]{\chi(#1)}
\newcommand{\fl}{\varphi}
\newcommand{\f}[1]{\fl(#1)}
\newcommand{\ai}[2]{a_{#2}(\hat{#1})}
\newcommand{\qedclaim}{\hfill $\blacksquare$}
\newif \ifArxiv
\begin{document}
\title{Quasi-upward Planar Drawings\\ with Minimum Curve Complexity\thanks{This work is partially supported by: $(i)$ MIUR, grant 20174LF3T8 ``AHeAD: efficient Algorithms for HArnessing networked Data'', $(ii)$ Dipartimento di Ingegneria - Universit\`a degli Studi di Perugia, grants RICBA19FM: ``Modelli, algoritmi e sistemi per la visualizzazione di grafi e reti'' and RICBA20EDG: ``Algoritmi e modelli per la rappresentazione visuale di reti''.
}}


\author{
	Carla~Binucci{ \href{https://orcid.org/0000-0002-5320-9110}{\protect\includegraphics[scale=0.45]{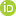}}
	}{}\textsuperscript{(\Letter)} \and
	Emilio~Di~Giacomo\texorpdfstring{ \href{https://orcid.org/0000-0002-9794-1928}{\protect\includegraphics[scale=0.45]{orcid.png}}}{} \and
	\\Giuseppe~Liotta{ \href{https://orcid.org/0000-0002-2886-9694}{\protect\includegraphics[scale=0.45]{orcid.png}}}{} \and
	Alessandra~Tappini\texorpdfstring{ \href{https://orcid.org/0000-0001-9192-2067}{\protect\includegraphics[scale=0.45]{orcid.png}}}{}
}


\institute{
Universit\`a degli Studi di Perugia, Italy\\
\email{\{carla.binucci,emilio.digiacomo,giuseppe.liotta,alessandra.tappini\}@unipg.it}
}

\maketitle

\newtheorem{observation}{Observation}
\newtheorem{claimx}{Claim}

\begin{abstract}
	
This paper studies the problem of computing quasi-upward planar drawings of bimodal plane digraphs with  minimum curve complexity, i.e., drawings such that the maximum number of bends per edge is minimized.
We prove that every bimodal plane digraph admits a quasi-upward planar drawing with curve complexity two, which is worst-case optimal. We also show that the problem of minimizing the curve complexity in a quasi-upward planar drawing can be modeled as a min-cost flow problem on a unit-capacity planar flow network. This gives rise to an $\tilde{O}(m^\frac{4}{3})$-time algorithm that computes a quasi-upward planar drawing with minimum curve complexity; in addition, the drawing has the minimum number of bends when no edge can be bent more than twice. For a contrast, we show  bimodal planar digraphs whose bend-minimum quasi-upward planar drawings require linear curve complexity even in the variable embedding setting.

\end{abstract}

\section{Introduction}
Let $G$ be a \emph{plane digraph}, i.e., a directed graph with a given planar embedding. A vertex $v$ of $G$ is \emph{bimodal} if the circular order of the edges around $v$ can be partitioned into two (possibly empty) sets of consecutive edges, one consisting of the incoming edges and the other one consisting of the outgoing edges. If every vertex of $G$ is bimodal, $G$ is a \emph{bimodal plane digraph}. See for example \Cref{fig:quasi-upward-a}.

A planar drawing of a bimodal plane digraph $G$ is \emph{upward planar} if all the edges are represented by curves monotonically increasing in the vertical direction. A digraph that admits an upward planar drawing is \emph{upward planar}. Having a bimodal embedding is a necessary but not sufficient condition for a digraph to be upward planar~\cite{dett-gd-99}. For example, the (embedded) digraph in \Cref{fig:quasi-upward-a} is not upward planar.

Garg and Tamassia~\cite{DBLP:journals/siamcomp/GargT01} proved that testing a bimodal digraph for upward planarity is NP-hard in the variable embedding setting, i.e., when all possible bimodal planar embeddings must be checked. In this setting, an  $O(n^4)$-time algorithm exists for series-parallel digraphs~\cite{DBLP:journals/siamdm/DidimoGL09}, where $n$ is the number of vertices. FPT solutions, SAT formulations, and branch-and-bound approaches have also been proposed for general digraphs (see, e.g.,~\cite{DBLP:journals/algorithmica/BertolazziBD02,10.1007/978-3-540-30140-0_16,DBLP:journals/siamdm/DidimoGL09,DBLP:journals/ijfcs/HealyL06}). On the other hand, upward planarity testing can be solved in polynomial time in the fixed embedding setting, i.e., when the input is a bimodal plane digraph $G$ and the algorithm tests whether $G$ admits an upward planar drawing that preserves the given bimodal embedding~\cite{DBLP:journals/algorithmica/BertolazziBLM94}. See also~\cite{DBLP:reference/algo/Didimo16} for a survey on upward planarity.

%

\begin{figure}[tb]
	\centering
	\subfigure[]{
		\includegraphics[width=0.3\textwidth, page=1]{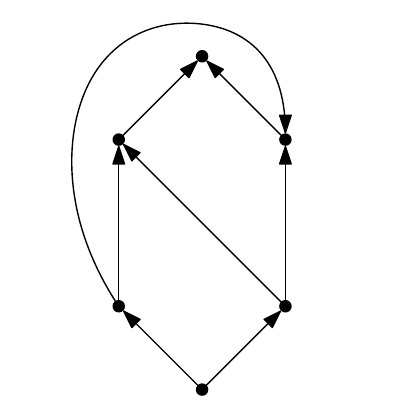}
		\label{fig:quasi-upward-a}
	}
	\hfil
	\subfigure[]{
		\includegraphics[width=0.3\textwidth, page=2]{figures/quasi-upward}
		\label{fig:quasi-upward-b}
	}
	\hfil
\subfigure[]{
	\includegraphics[width=0.3\textwidth, page=3]{figures/quasi-upward}
	\label{fig:quasi-upward-pl}
}
	\caption{(a) A bimodal plane digraph $G$. (b) A quasi-upward planar~drawing~of~$G$. (c) The same drawing with poly-line edges.}
	\label{fig:quasi-upward}
\end{figure}

Motivated by the observation that only restricted families of bimodal digraphs are upward planar, Bertolazzi et al. introduced quasi-upward planar drawings~\cite{DBLP:journals/algorithmica/BertolazziBD02}. A drawing $\Gamma$ of a digraph $G$ is \emph{quasi-upward planar} if it has no edge crossings and for each vertex $v$ there exists a sufficiently small disk $R$ of the plane, properly containing $v$, such that, in the intersection of $R$ with $\Gamma$, the horizontal line through $v$ separates the incoming edges (below the line) from the outgoing edges (above the line); see~\Cref{fig:quasi-upward-b}. Intuitively, all the incoming edges enter $v$ from ``below'' and all the outgoing edges leave $v$ from ``above''. A digraph that admits a quasi-upward planar drawing is \emph{quasi-upward planar}. An edge of a quasi-upward planar drawing that is not upward has at least one horizontal tangent. Each point of horizontal tangency is called a \emph{bend} (see~\Cref{fig:quasi-upward-b}). This term is justified by the fact that an edge with $b$ points of horizontal tangency can be represented as a poly-line with $b$ bends (substituting each point of tangency with a vertex $v$ and suitable orienting the edges incident to $v$, we obtain an upward planar digraph which always has a straight-line upward planar drawing~\cite{dett-gd-99}); see~\Cref{fig:quasi-upward-pl}. 

Bertolazzi et al.~\cite{DBLP:journals/algorithmica/BertolazziBD02} prove that, different from upward planarity, having a planar bimodal embedding is necessary and sufficient for a digraph to be quasi-upward planar.  They also study the problem of computing quasi-upward planar drawings with the minimum number of bends and use a suitable flow network to solve it in ${\tilde O}(n^{2})$-time in the fixed embedding setting. They also describe a branch-and-bound algorithm in the variable embedding setting. A list of papers about quasi-upward planarity also includes~\cite{DBLP:conf/walcom/BinucciD14,DBLP:journals/cj/BinucciD16,DBLP:journals/tcs/BinucciDP14}.
%

\medskip
\noindent\textbf{Our contribution.} In this paper we study the problem of computing quasi-upward planar drawings with minimum curve complexity of bimodal plane digraphs possibly having multiple edges. The \emph{curve complexity} is  the maximum number of bends along any edge of the drawing. We recall that minimizing the curve complexity is a classical subject of investigation in Graph Drawing (see, e.g.,~\cite{DBLP:journals/jgaa/BekosKK17,DBLP:journals/comgeo/BiedlK98,DBLP:conf/walcom/BinucciD14,DBLP:journals/comgeo/ChaplickLWZ19,DBLP:conf/gd/GiacomoDLM09,DBLP:journals/tcs/GiacomoGLN20,DBLP:journals/algorithmica/GiacomoLT10,DBLP:journals/algorithmica/Kant96,DBLP:journals/jgaa/KaufmannW02,JGAA-547}).
Our results can be summarized as follows.

\begin{itemize}
	\item In~\Cref{se:combinatoric} we prove that every bimodal plane digraph admits an embedding-preserving quasi-upward planar drawing with curve complexity two. This is worst-case optimal, since the number of bends per edge in a quasi-upward planar drawing is an even number and not all bimodal plane digraphs are upward planar~\cite{dett-gd-99}. This result is the counterpart in the quasi-upward planar setting of a well-known result by Biedl and Kant who prove, in the orthogonal setting, that every plane graph of degree at most four admits an orthogonal drawing with curve complexity two~\cite{DBLP:journals/comgeo/BiedlK98}.
	
	
	\item In~\Cref{se:unit-flow} we study the problem of minimizing the curve complexity in a quasi-upward planar drawing. This problem can be modeled as a min-cost flow problem on a unit-capacity planar flow network. By exploiting a result of Karczmarz and Sankowski~\cite{DBLP:conf/esa/KarczmarzS19} we obtain an algorithm to compute embedding-preserving quasi-upward planar drawings that minimize the curve complexity and that have the minimum number of bends when no edge can be bent more than twice that runs in ${\tilde O}(m^{\frac{4}{3}})$ time, where $m$ is the number of edges of the input digraph. We recall that the problem of computing planar drawings that minimize the number of bends while keeping the curve complexity bounded by a constant has already been studied, for example in the context of orthogonal representations (see, e.g.,~\cite{DBLP:conf/soda/DidimoLOP20,DBLP:conf/gd/DidimoLP18,DBLP:journals/jgaa/RahmanNN99}).
	
	
	\item A quasi-upward planar drawing with minimum curve complexity may have linearly many bends in total. Thus, a natural question to ask is whether these many bends are sometimes necessary if we just minimize the number of bends independent of the curve complexity and, if so, what the curve complexity may be. In \Cref{se:lowerbound} we prove that for every $n \geq 39$ there exists a planar bimodal digraph with $n$ vertices whose bend-minimum quasi-upward planar drawings have at least $cn$ bends on a single edge, for a constant $c>0$. We  show that this bound holds even in the variable embedding setting. This result can be regarded as the counterpart in the quasi-upward planar setting of a result by Tamassia et al.~\cite{DBLP:conf/spdp/TamassiaTV91} showing a similar lower bound on the curve complexity of bend-minimum planar orthogonal representations.
\end{itemize}

Preliminaries are in \Cref{se:preliminaries}, while \Cref{se:conclusions} lists some open problems. Proofs marked with ($\star$) are omitted or sketched and can be found in 
\ifArxiv
the \mbox{appendix}. 
\else
\cite{DBLP:journals/corr/abs-xxxx-xxxxx}.
\fi

\section{Preliminaries}\label{se:preliminaries}

We consider multi-digraphs, that are directed graphs which can have multiple edges. For simplicity we shall call them \emph{digraphs}. We also assume the digraphs to be connected; indeed a digraph $G$ has a quasi-upward drawing if and only if each connected component of $G$ has a quasi-upward drawing. A vertex of a digraph $G$ without incoming (outgoing) edges is a \emph{source} (\emph{sink}) of $G$. A vertex that is not a source nor a sink is an \emph{internal vertex}.

A \emph{drawing} $\Gamma$ of a digraph $G=(V,E)$ is a mapping of the vertices of $V$ to points of the plane, and of the edges in $E$ to Jordan arcs connecting their corresponding endpoints but not passing through any other vertex. Drawing $\Gamma$ is \emph{planar} if any two edges can only meet at common endpoints. A digraph is \emph{planar} if it admits a planar drawing. A planar drawing of a planar digraph $G$ subdivides the plane into topologically connected regions, called \emph{faces}. The infinite region is the \emph{external face}. A \emph{planar embedding} $\mathcal E$ of $G$ is an equivalence class of planar drawings that define the same set of faces and have the same external face. A planar embedding of a connected digraph can be uniquely identified by the clockwise circular order of the edges around each vertex and by the external face. A \emph{plane digraph} $G$ is a planar digraph with a given planar embedding. The number of vertices encountered in a closed walk along the boundary of a face $f$ of $G$ is the \emph{degree} of $f$, denoted as $\delta(f)$. If $G$ is not biconnected, a vertex may be encountered more than once, thus contributing  more than once to the degree of the face. The \emph{dual digraph} of $G$ is a plane digraph with a vertex for each face of $G$ and an edge $e'$ between two faces for each edge $e$ of $G$ shared by the two faces. The edge $e'$ is oriented from the face to the left of $e$ to the face \mbox{to the right of $e$.}

\section{Subdivisions of Bimodal Plane Digraphs}\label{se:combinatoric}

In this section we show how to suitably subdivide the edges of a bimodal plane digraph so to obtain an upward plane digraph. We start by recalling the notions of large angles and upward consistent assignments~\cite{DBLP:journals/algorithmica/BertolazziBLM94}.

\medskip
\noindent \textbf{Bimodality and upward consistent assignments.} Let $G$ be a bimodal plane digraph. Let $f$ be a face of $G$, let $e_1$ and $e_2$ be two consecutive edges encountered in this order when walking counterclockwise along the boundary of $f$, and let $v$ be the vertex shared by $e_1$ and $e_2$; the pair $(e_1,e_2)$ is an \emph{angle} of $f$ at vertex $v$ (\Cref{fig:quasi-upward-c} highlights the angles of a face $f$). Notice that if $v$ has exactly one incident edge $e$, then $e_1=e_2=e$ and the pair $(e,e)$ is also an angle of $f$ at $v$.
Let $f$ be a face of $G$, let $v$ be a vertex of $f$, and let $(e_1,e_2)$ be an angle of $f$ at $v$. Angle $(e_1,e_2)$ is a \emph{source-switch} of $f$ if $e_1$ and $e_2$ are both outgoing edges for $v$; $(e_1,e_2)$ is a \emph{sink-switch} of $f$ if $e_1$ and $e_2$ are both incoming edges for $v$. An angle of $f$ that is neither a source-switch nor a sink-switch is a \emph{non-switch} of $f$. It is easy to observe that for any face $f$ the number of source-switches equals the number of sink-switches (in \Cref{fig:quasi-upward-c} the source- and sink-switches of $f$ are indicated). The number of source-switches in a face $f$ is denoted by $\mathcal A(f)$. The \emph{capacity} of $f$ is $\mathcal A(f)+1$ if $f$ is the external face and it is $\mathcal A(f)-1$ otherwise.

\begin{lemma}\emph{\cite{DBLP:journals/algorithmica/BertolazziBLM94}}\label{le:supply-capacity}
	Let $G$ be a bimodal plane digraph. The number of source and sink vertices of $G$ is equal to the sum of the capacities of the faces of $G$.
\end{lemma}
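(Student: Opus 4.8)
The plan is to establish the identity by a double-counting / Euler-formula argument, comparing two ways of accounting for the angles of $G$. First I would classify every angle of every face as a source-switch, a sink-switch, or a non-switch, and count each type globally. Summing $\delta(f)$ over all faces $f$ counts each edge-side exactly once, hence equals $2m$ where $m$ is the number of edges; since the total number of angles of a face of degree $\delta(f)$ is $\delta(f)$, the grand total of all angles over all faces is $2m$. Separately, each angle of $G$ corresponds to a consecutive pair of edges around some vertex $v$, so the angles at $v$ are in bijection with the cyclically consecutive pairs in the rotation at $v$, of which there are $\deg(v)$ (or one, if $\deg(v)=1$). Thus the total number of angles is also $\sum_v \deg(v) = 2m$, which is a consistency check rather than the crux; the real work is to separate out the switches.

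Next I would count switches vertex by vertex. Fix a vertex $v$. By bimodality the rotation at $v$ splits into a block of consecutive incoming edges and a block of consecutive outgoing edges. If $v$ is a source (all edges outgoing) or a sink (all edges incoming), then walking around $v$ one sees exactly as many source-switch angles as there are angles at $v$ in the all-outgoing case, and one extra way of accounting is needed for the degenerate degree-one situation; more useful is the observation that a source contributes, over all incident faces, a net excess of source-switches over sink-switches, and symmetrically a sink contributes a net excess of sink-switches, while every internal vertex contributes exactly the same number of source-switches as sink-switches (the two ``transition'' angles between the incoming block and the outgoing block are both non-switches, and the remaining angles inside each block pair up). Quantitatively, an internal vertex of degree $d$ yields $d-2$ switches split evenly? — here I must be careful: at an internal vertex the incoming block of size $p$ gives $p-1$ sink-switch angles and the outgoing block of size $q$ gives $q-1$ source-switch angles, with $p+q=d$, so it contributes $q-1$ source-switches and $p-1$ sink-switches, not necessarily equal. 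The correct invariant is therefore the signed count: (source-switches at $v$) $-$ (sink-switches at $v$) $= (q-1)-(p-1)=q-p$ for an internal vertex, while a source of degree $d$ contributes $d-1$ source-switches and $0$ sink-switches (signed: $d-1$? no — a source of degree $d$ has $d$ angles all of source-switch type, wait only $d$ angles total, all source-switches, so signed contribution $d$, except degree one gives $1$), and symmetrically a sink contributes $-(d)$ roughly. I would nail down these local counts precisely, handling the degree-one case, and then sum.

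Summing the signed quantity $\sum_f\big(\mathcal A(f) - (\text{sink-switches of }f)\big) = 0$ because, as noted in the excerpt, each face individually has equally many source- and sink-switches; rearranging the vertex-side total of the same signed sum then isolates $\#\{\text{sources}\} + \#\{\text{sinks}\}$ on one side. To match the right-hand side I would use Euler's formula $|V| - m + |F| = 2$ to convert the leftover $\sum_f \mathcal A(f)$ and the $\pm 1$ corrections (external face capacity $\mathcal A(f)+1$, internal faces $\mathcal A(f)-1$, so the corrections contribute $+1$ once and $-1$ for each of the $|F|-1$ internal faces, i.e. $1-(|F|-1)=2-|F|$) into the form $\sum_f \mathcal A(f) + 2 - |F|$, and show this equals $\#\{\text{sources}\}+\#\{\text{sinks}\}$. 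The main obstacle is getting every local count exactly right, especially the treatment of degree-one vertices (where $e_1=e_2$ and the single ``angle'' $(e,e)$ must be classified) and the off-by-one bookkeeping between faces and the global switch totals; once the local counts are correct, the aggregation is a routine application of Euler's formula.
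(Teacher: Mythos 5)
First, note that the paper does not prove this lemma at all: it is imported verbatim from Bertolazzi et al.\ \cite{DBLP:journals/algorithmica/BertolazziBLM94}, so your attempt can only be judged against the standard counting argument, not against a proof in this paper.

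Your overall scaffolding (classify angles, count them per vertex and per face, convert the $\pm 1$ capacity corrections into $\sum_f \mathcal A(f) + 2 - |F|$, finish with Euler's formula) is the right scaffolding, and your local counts are correct: an internal vertex with $p$ incoming and $q$ outgoing edges has $p-1$ sink-switch and $q-1$ source-switch angles, a source (sink) of degree $d$ has exactly $d$ source-switch (sink-switch) angles, including the degree-one case $(e,e)$. The genuine gap is the aggregation step you choose: the \emph{signed} sum (source-switches minus sink-switches) cannot isolate the number of sources and sinks, because every vertex $v$ contributes exactly $\mathrm{outdeg}(v)-\mathrm{indeg}(v)$ to it (this is $q-p$ for internal vertices, $d$ for sources, $-d$ for sinks), so the vertex-side signed total is $\sum_v(\mathrm{outdeg}(v)-\mathrm{indeg}(v))=m-m=0$ identically. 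Matching it with the face-side signed total, which is also $0$ by the per-face equality of source- and sink-switches, yields the vacuous identity $0=0$; no rearrangement of it produces $\#\{\text{sources}\}+\#\{\text{sinks}\}$. What you need instead is the \emph{unsigned} switch count: each internal vertex carries exactly $\deg(v)-2$ switch angles (the two block-transition angles are non-switches), while each source or sink carries exactly $\deg(v)$ switch angles, so the total number of switch angles is $2m-2(n-\sigma)$, where $\sigma$ is the number of sources and sinks. Since source- and sink-switches are equinumerous in every face, this total also equals $2\sum_f\mathcal A(f)$, giving $\sum_f\mathcal A(f)=m-n+\sigma$; plugging this into $\sum_f\mathcal A(f)+2-|F|$ and applying Euler's formula $n-m+|F|=2$ yields $\sigma$, as required. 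With that one substitution your outline becomes a complete proof; as written, the pivotal step fails.
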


Let $G$ be an upward plane digraph and let $\Gamma$ be an embedding-preserving upward planar drawing of $G$. The angles of $G$ correspond to geometric angles in $\Gamma$. In particular, an angle $(e,e)$ of $G$ corresponds to a $2\pi$ angle in $\Gamma$. For each face $f$ of $G$, and for each source- or sink-switch $a$ of $f$, we assign a label $L$ to $a$, if $a$ is larger than $\pi$ in $\Gamma$; in this case we say that $a$ is a \emph{large angle}. \Cref{fig:quasi-upward-e} shows an embedding-preserving upward planar drawing of the graph in \Cref{fig:quasi-upward-c} with the angles larger that $\pi$ highlighted; the corresponding angles in \Cref{fig:quasi-upward-d} are labeled with an $L$.  We denote the number of $L$ labels on the angles of $f$ by $L(f)$. Also, if $v$ is a vertex of $G$, we denote by $L(v)$ the number of $L$ labels on all angles at vertex $v$. In~\cite{DBLP:journals/algorithmica/BertolazziBLM94} it is shown that $L(v) = 0$ if $v$ is an internal vertex, and $L(v)=1$ if $v$ is a source or a sink. 
Also, $L(f) = \mathcal A(f) + 1$ if $f$ is the external face and $L(f)=\mathcal A(f) - 1$ otherwise; in other words, the number of large angles inside each face is equal to its capacity (see, e.g., faces $f$ and $f'$ in \Cref{fig:quasi-upward-d}).

A bimodal plane digraph $G$ is upward planar if and only if it is acyclic and it admits an upward consistent assignment~\cite{DBLP:journals/algorithmica/BertolazziBLM94}. An \emph{upward consistent assignment} is an assignment of the source and sink vertices of $G$ to its faces such that: (i) Each source or sink $v$ is assigned to exactly one of its incident faces. (ii) For each face $f$, the number of source and sink vertices assigned to $f$ is equal to the capacity of $f$. Assigning a source or sink $v$ to a face $f$ corresponds to assigning an $L$ label to an angle that $v$ forms in $f$. See \Cref{fig:quasi-upward-d} for an example. 

\begin{figure}[t!]
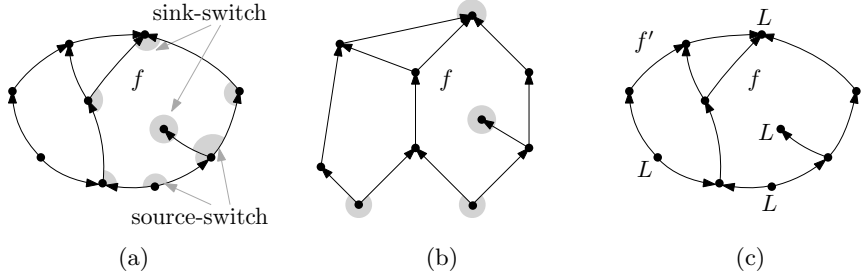

	\centering
	\subfigure[]{
		\includegraphics[width=0.3\textwidth, page=5]{figures/quasi-upward}
		\label{fig:quasi-upward-c}
	}
	\hfil
	\subfigure[]{
		\includegraphics[width=0.3\textwidth, page=7]{figures/quasi-upward}
		\label{fig:quasi-upward-e}
	}
	\hfil
	\subfigure[]{
		\includegraphics[width=0.3\textwidth, page=6]{figures/quasi-upward}
		\label{fig:quasi-upward-d}
	}
	\caption{(a) Angles (shown in gray), source-switches, and sink-switches of a face $f$ of a bimodal plane digraph $G$; $\mathcal A(f) =2$; (b) An upward planar drawing of $G$. (c) The assignments of $L$ labels to the angles of $G$ ($L(f){=}1$, $L(f'){=}3$).}
	\label{fig:upward}
\end{figure}

\medskip
\noindent \textbf{$2$-subdivisions of bimodal plane digraphs.} Let $G$ be a bimodal plane digraph. A face $f$ of $G$ is \emph{nice} if $\delta(f)=4$ and each angle of $f$ is either a source-switch or a sink-switch (see, e.g., $f_2$ in \Cref{fig:quasi-triangulated-a}). We augment $G$ by adding edges (possibly creating multiple edges) so that the augmented digraph $G'$ is bimodal and each face of $G'$ either has degree two, or three, or it is nice (note that there can be more than one such augmentations). The resulting digraph is called a \emph{quasi-triangulation} of $G$ and all its faces have degree at most four. See \Cref{fig:quasi-triangulated-a,fig:quasi-triangulated-b}. The following lemma, whose proof is reported 
\ifArxiv
in the appendix for completeness, 
\else
in~\cite{DBLP:journals/corr/abs-xxxx-xxxxx} for completeness,
\fi
can also be derived as a \mbox{special case of~\cite[Lemma 5]{DBLP:journals/corr/abs-2008-07834}}.

\begin{figure}[t!]
	\centering
	\subfigure[]{
		\includegraphics[width=0.35\textwidth, page=1]{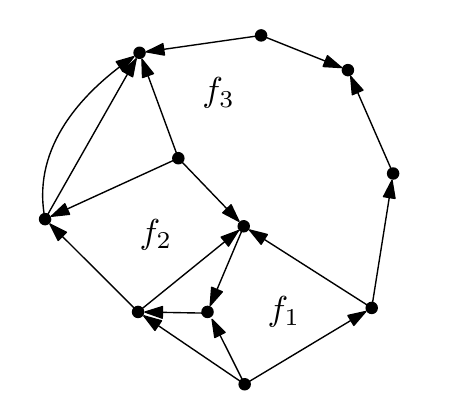}
		\label{fig:quasi-triangulated-a}
	}
	\hfil
	\subfigure[]{
		\includegraphics[width=0.35\textwidth, page=2]{figures/quasi-triangulated}
		\label{fig:quasi-triangulated-b}
	}
	\caption{(a) A bimodal plane digraph $G$; $f_2$ is nice, while $f_1$ is not; $f_3$ is such that $\delta(f_3) \geq 5$. (b) A quasi-triangulation of $G$: The added edges are colored blue.
	}
	\label{fig:quasi-triangulated.1}
\end{figure}

\begin{restatable}[$\star$]{lemma}{lequasitriangulation}\label{le:quasi-triangulation}
	Every bimodal plane digraph admits a quasi-triangulation.
\end{restatable}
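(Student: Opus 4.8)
The plan is to build the quasi-triangulation by repeatedly inserting a single edge inside a face, always preserving bimodality and planarity, until every face has degree at most four and every degree-four face is nice. Call a face $f$ of a bimodal plane digraph \emph{bad} if $\delta(f)\ge 5$, or $\delta(f)=4$ and $f$ is not nice; equivalently, $f$ is bad iff $\delta(f)\ge 5$, or $\delta(f)=4$ and $f$ has a non-switch angle. If there is no bad face we are done, so it suffices to prove the following: if $G$ has a bad face $f$, then we can insert an edge $e$ \emph{inside} $f$ so that the resulting digraph is still bimodal and plane, and $f$ is replaced by two faces of \emph{strictly smaller} degree. Granting this, the process terminates, since the potential $\sum_{f\ \mathrm{bad}}2^{\delta(f)}$ strictly decreases at each step: a bad face of degree $d$ is split into two faces of degrees $d_1,d_2$ with $d_1+d_2=d+2$ and $3\le d_1,d_2\le d-1$, and when $d=4$ the two new faces are triangles, hence not bad; planarity and connectivity are preserved because $e$ is drawn inside $f$. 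The limit digraph is then a quasi-triangulation of $G$.

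The first ingredient is a local criterion for edge insertion. Let $u$ and $w$ occur on the boundary of a face $f$, at two \emph{non-consecutive} boundary positions with angles $a$ (at $u$) and $b$ (at $w$), and suppose $u\ne w$, so an edge $uw$ can be routed inside $f$ between these positions, splitting $f$ into two faces whose degrees lie in $\{3,\dots,\delta(f)-1\}$. Inspecting the circular order of the edges around $u$: inserting an \emph{outgoing} edge into the angle $a$ keeps $u$ bimodal unless $a$ is a sink-switch and $u$ is not a sink, and symmetrically inserting an \emph{incoming} edge keeps $u$ bimodal unless $a$ is a source-switch and $u$ is not a source (a single new edge forms a trivial block of consecutive edges, which is the reason for the source/sink exceptions; a non-switch angle accepts a new edge of either direction). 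Consequently $uw$ admits an orientation keeping both endpoints bimodal \emph{unless} $a$ and $b$ are the same type of switch (both source-switches or both sink-switches); in every other case — in particular whenever one of $a,b$ is a non-switch, or one is a source-switch and the other a sink-switch — such an orientation exists. So it remains to establish the following.

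\smallskip\noindent\emph{Claim:} every bad face $f$ has two non-consecutive boundary positions, at distinct vertices, whose angles are not the same type of switch.

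\smallskip For the Claim I would use that in every face the number of source-switches equals the number of sink-switches, so the number of non-switch angles has the same parity as $\delta(f)$. If $\delta(f)=4$ and $f$ is not nice, then $f$ has a non-switch angle and, by parity, at least two; the only non-consecutive position pairs are the two ``diagonal'' pairs, and a short case check (distinguishing how the $\ge 2$ non-switch angles are distributed over the two diagonal groups) shows that at least one diagonal pair consists of distinct vertices with a non-switch angle among them — the only alternative would force the boundary walk to be $v_1v_2v_1v_2$, which is impossible, since then all four boundary edges join $v_1$ and $v_2$ and the planar rotation system forces any such face to be a digon. If $\delta(f)\ge 5$ and $f$ has a non-switch angle at a position $p$, assume for contradiction that every non-consecutive position at a vertex different from $v(p)$ has an angle equal to that of $p$; since $a_p$ is not a switch, this forces those positions all to be occurrences of $v(p)$, and because for $\delta(f)\ge 5$ two \emph{consecutive} positions are simultaneously non-consecutive with $p$, the edge joining them is a self-loop, a contradiction. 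Finally, if $\delta(f)\ge 5$ and all angles of $f$ are switches, the assumption that every non-consecutive distinct-vertex pair has same-type angles, combined with the fact that the ``non-consecutiveness'' relation on the $\delta(f)\ge 5$ positions is connected (it is the complement of a cycle, once self-loop-free degenerate pairs are discarded), forces all angles to be of one single type — contradicting the equality of source- and sink-switch counts.

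The main obstacle is the bookkeeping for faces with repeated vertices, which occur when $G$ is not biconnected: there ``non-consecutive positions'' need not be ``distinct vertices'', and a chord joining two occurrences of one vertex would be a self-loop, which is disallowed. The observation that tames this is exactly that a digraph has no self-loops, so any two consecutive boundary positions lie at distinct vertices; this rules out the few degenerate configurations (such as the $v_1v_2v_1v_2$ walk) that would otherwise block the argument, at the cost of the short but slightly tedious case analysis indicated above for small $\delta(f)$. Everything else — the local bimodality criterion, the source/sink-switch parity, and the connectivity of the complement of a cycle — is elementary.
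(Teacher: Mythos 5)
Your overall strategy (repeatedly insert a bimodality-preserving chord inside any face that has degree at least five or is a non-nice quadrilateral, with a potential-function termination argument) is the same kind of incremental augmentation the paper uses, and your local insertion criterion, the degree-$4$ analysis, and the ``non-switch angle in a face of degree $\geq 5$'' case are all sound. However, there is a genuine gap in the last case, where $\delta(f)\geq 5$ and \emph{all} angles of $f$ are switches. There you assert that the relation ``non-consecutive positions at distinct vertices'' is connected because it is ``the complement of a cycle, once self-loop-free degenerate pairs are discarded''. This is exactly the point where the claim needs work, and the assertion is false as stated: if all angles are switches then the boundary traversal alternates direction, so $\delta(f)$ is even and the switch types alternate around $f$; for $\delta(f)=6$ consider a boundary walk of the form $abcabc$ (opposite positions at the same vertex). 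Discarding the three same-vertex ``diagonal'' pairs from the complement of $C_6$ leaves two disjoint triangles (the distance-two pairs), each of which is monochromatic in switch type, so your hypothesis ``every usable pair has same-type angles'' is satisfied and no contradiction with the balance of source- and sink-switches follows. In other words, your argument silently assumes such a walk cannot occur, but nothing you wrote rules it out; your digon argument only covers the degree-$4$ walk $v_1v_2v_1v_2$.

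The needed statement is true, but closing the gap requires an extra planarity argument. For $\delta(f)\geq 8$ one can argue directly: if every pair at odd cyclic distance at least $3$ were at the same vertex, then $w_i=w_{i+3}$ and $w_i=w_{i+5}$ for all positions $i$, forcing two consecutive positions to coincide, i.e.\ a self-loop. For $\delta(f)=6$ one must exclude the walk $abcabc$ itself, e.g.\ by noting that no edge can be traversed twice in the same direction by a face boundary, so the six traversals use six distinct edges on three vertices all incident to one face, which contradicts an Euler/face-degree count (the same style of argument as your digon remark, but it is a separate case you did not make). Alternatively, you could sidestep much of your case analysis by adopting the paper's move for the non-switch case: if $(e_1,e_2)$ is a non-switch at $w$, add the chord between the endpoints of $e_1$ and $e_2$ other than $w$, oriented consistently with the directed $2$-path through $w$ and embedded next to $e_1,e_2$; this preserves bimodality at both endpoints regardless of their angle types. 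The all-switch case, however, remains and is precisely where your justification currently breaks.
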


The \emph{$2$-subdivision} of a bimodal plane digraph $G$ is the graph $\hat{G}$ obtained from $G$ by replacing each edge $e=(u,v)$ of $G$ with the three edges $(u,t_e)$, $(s_e,t_e)$, $(s_e,v)$, where $s_e$ and $t_e$ are two subdivision vertices. See \Cref{fig:bipartite-description-a} for an illustration. Notice that $s_e$ is a source and $t_e$ is a sink. Furthermore, the $2$-subdivision $\hat{G}$ of $G$ is bimodal, it has no multiple edges, and it is acyclic even if $G$ is not. We prove that $\hat{G}$ is upward planar, which implies that $G$ admits an embedding preserving quasi-upward planar drawing with curve complexity two. Since $\hat{G}$ is bimodal and acyclic, it is sufficient to prove that $\hat{G}$ admits an upward consistent assignment. We model the problem of computing an upward consistent assignment of $\hat{G}$ as a matching problem on a suitably defined bipartite~graph.


The \emph{bipartite description} of $\hat{G}$ is the bipartite graph $H_{\hat{G}}=(A,B,E)$ defined as follows. The vertex set $A$ contains for each face $\hat{f}$ of $\hat{G}$ a set of vertices $\ai{f}{1},\ai{f}{2},\dots,\ai{f}{c}$, where $c$ is the capacity of $\hat{f}$. Each vertex $\ai{f}{i}$ (for $i=1,\dots,c$) is called a \emph{representative vertex} of face $f$. The vertex set $B$ contains the source and sink vertices of $\hat{G}$. There is an edge $(\ai{f}{i},v)$ in $E$ if $v$ is a source or sink vertex of face $\hat{f}$ (for $i=1,\dots,c$). See \Cref{fig:bipartite-description-b} for an illustration.

\begin{restatable}[$\star$]{lemma}{leperfectmatching}\label{le:perfect-matching} The $2$-subdivision of a bimodal plane digraph admits an upward consistent assignment if and only if its bipartite description has a \mbox{perfect matching}.
\end{restatable}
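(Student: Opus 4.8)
The plan is to prove the equivalence by relating upward consistent assignments of $\hat{G}$ directly to perfect matchings of $H_{\hat{G}} = (A,B,E)$, via a natural bijection between the two sides and then checking that the defining conditions correspond. First I would recall that an upward consistent assignment of $\hat{G}$ assigns each source or sink vertex of $\hat{G}$ to one of its incident faces so that every face $\hat{f}$ receives exactly $c = \mathrm{capacity}(\hat{f})$ such vertices. The vertex set $B$ of $H_{\hat{G}}$ is precisely the set of sources and sinks of $\hat{G}$, and for each face $\hat{f}$ the set $A$ contains exactly $c$ representative copies $a_1(\hat{f}),\dots,a_c(\hat{f})$. So from an assignment $\psi$ I would build a matching $M_\psi$ by, for each face $\hat{f}$, listing the $\le c$ vertices assigned to $\hat{f}$ and matching them to distinct representatives $a_i(\hat{f})$ in any fixed way; conversely, from a matching $M$ I obtain an assignment $\psi_M$ by sending $v \in B$ to the face $\hat{f}$ whose representative $M$ pairs with $v$. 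The edges of $H_{\hat{G}}$ are exactly the pairs $(a_i(\hat{f}),v)$ with $v$ a source or sink on the boundary of $\hat{f}$, so both constructions respect the incidence constraint (condition (i) of an upward consistent assignment).

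The heart of the argument is a counting check that pins down when these are \emph{perfect} matchings and why the "exactly $c$" condition (condition (ii)) is equivalent to saturating the $A$-side. Note $|B|$ equals the number of sources and sinks of $\hat{G}$, and $|A| = \sum_{\hat{f}} \mathrm{capacity}(\hat{f})$; by \Cref{le:supply-capacity} applied to $\hat{G}$ these two quantities are equal. Hence a matching of $H_{\hat{G}}$ is perfect iff it saturates $A$ iff it saturates $B$. Now if $\psi$ is an upward consistent assignment, every $v \in B$ is assigned to exactly one face, so $M_\psi$ saturates $B$; and since each face gets exactly $c$ assigned vertices, all $c$ representatives $a_i(\hat{f})$ are used, so $M_\psi$ saturates $A$ as well — it is perfect. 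Conversely, if $M$ is perfect, then since it saturates $B$ every source/sink is assigned to exactly one face (condition (i)); and since it saturates $A$, each face $\hat{f}$ has all $c$ of its representatives matched, hence receives exactly $c$ assigned vertices (condition (ii)); therefore $\psi_M$ is an upward consistent assignment. I would also remark that $\hat{G}$ has no vertex of degree one after the $2$-subdivision, so the degenerate angle case $(e,e)$ does not arise and every source/sink representative on a face corresponds to an ordinary angle.

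The main obstacle I anticipate is purely bookkeeping rather than conceptual: one must be careful that "source or sink vertex of face $\hat{f}$" in the definition of $E$ counts the incidences correctly when $\hat{G}$ is not biconnected — a cut vertex can appear more than once on the boundary of a face, and one should confirm that the construction (and \Cref{le:supply-capacity}, whose statement already accounts for repeated vertices via the degree of a face) still makes the incidence-vs-assignment correspondence tight; but since a source or sink $v$ forms a single large angle and is assigned to a single face in an upward consistent assignment, the matching formulation needs only one edge per (face, source/sink) incidence and the counting goes through unchanged. A minor point to state explicitly is that the bijection $\psi \leftrightarrow M$ is not canonical (there is freedom in which representative $a_i(\hat{f})$ a given vertex is matched to), but this is irrelevant since we only need existence on both sides.
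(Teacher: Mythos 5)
Your proposal is correct and takes essentially the same route as the paper: both sides of the equivalence are handled by the direct correspondence between the vertices assigned to a face and edges matching them to that face's representative vertices, with your extra counting step ($|A|=|B|$ via \Cref{le:supply-capacity}) being a harmless addition the paper's proof does not even need, since it verifies saturation of both sides directly. One small aside in your write-up is false but immaterial: a degree-one vertex of $G$ remains of degree one in $\hat{G}$ (subdividing its incident edge does not change its degree), yet the degenerate angle $(e,e)$ causes no difficulty for the correspondence anyway.
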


A bimodal plane digraph is \emph{face-acyclic} if its face boundaries are not cycles. To prove the main result of this section, we first consider face-acyclic bimodal plane digraphs. We then show how to extend the result to the general case.

\begin{lemma}\label{le:hall}
The $2$-subdivision of a face-acyclic bimodal plane digraph is upward planar.
\end{lemma}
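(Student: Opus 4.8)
The plan is to reduce the statement to a perfect-matching problem and then verify Hall's condition by an Euler-type count, using face-acyclicity only through the fact that every face of $G$ has at least one source-switch. Let $G$ be a face-acyclic bimodal plane digraph and $\hat G$ its $2$-subdivision. Since $\hat G$ is bimodal and acyclic, the characterization of Bertolazzi et al.~\cite{DBLP:journals/algorithmica/BertolazziBLM94} reduces upward planarity of $\hat G$ to the existence of an upward consistent assignment, and \Cref{le:perfect-matching} reduces that to the bipartite description $H_{\hat G}=(A,B,E)$ having a perfect matching; by \Cref{le:supply-capacity} applied to $\hat G$ we have $|A|=|B|$, so it is enough to check Hall's condition. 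First I would observe that the $A$-neighbourhood of a set of representative vertices depends only on which faces it meets, and that enlarging such a set to all representatives of those faces never helps; hence Hall's condition is equivalent to the following: for every non-empty set $F$ of faces of $G$, $\sum_{f\in F}c(\hat f)\le|N(F)|$, where $\hat f$ is the face of $\hat G$ corresponding to $f$ and $N(F)$ is the set of sources and sinks of $\hat G$ on the boundary of some face in $F$.

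Next I would unwind both sides. Each edge $e$ on $\partial f$ contributes to $\hat f$ a subdivision source $s_e$ (a source-switch) and a subdivision sink $t_e$ (a sink-switch), and the switches at the original vertices are unchanged, so $\mathcal A(\hat f)=\mathcal A(f)+\delta(f)$ and $c(\hat f)=\mathcal A(f)+\delta(f)+\epsilon_f$, with $\epsilon_f=+1$ if $f$ is the external face and $-1$ otherwise. On the other side, $N(F)$ consists of $s_e,t_e$ for every edge $e$ incident to a face of $F$ together with the original sources and sinks on such faces; writing $E_F$ and $W_F$ for these two sets, $|N(F)|=2|E_F|+|W_F|$. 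Using $\sum_{f\in F}\delta(f)=2|E_F|-|E_F^{1}|$, where $E_F^{1}$ collects the edges with exactly one incident face in $F$, the inequality to be proved becomes
\[
\sum_{f\in F}\mathcal A(f)\ \le\ |W_F|+|E_F^{1}|+|F|-2\cdot[\text{the external face of }G\text{ lies in }F].
\]

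To establish this I would pass to the plane subgraph $G_F$ of $G$ induced by the edges of $E_F$: each $f\in F$ is still a face of $G_F$, while the remaining faces of $G_F$ (the \emph{patches}) are unions of faces of $G$ not in $F$. Two ingredients drive the estimate. A vertex-by-vertex count gives the identity $\sum_{g\text{ face of }G_F}\mathcal A_{G_F}(g)=|E_F|-\iota$, where $\iota$ is the number of internal vertices of $G_F$; hence $\sum_{f\in F}\mathcal A(f)=|E_F|-\iota-\Sigma$, with $\Sigma\ge0$ the sum of $\mathcal A_{G_F}$ over the patches. Euler's formula for $G_F$ then links $|E_F|$, $|V(G_F)|=|W_F|+(\text{internal-in-}G\text{ vertices of }G_F)$, the number $|F|+p$ of faces of $G_F$, and the number of connected components of $G_F$. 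Substituting, the displayed inequality collapses to a finite combinatorial claim: the number of vertices that are internal in $G$ but become sources or sinks in $G_F$, plus the number of sources and sinks of $G$ incident to no face of $F$, is at most $|E_F^{1}|$ up to a small additive surplus. This is the point where face-acyclicity is used: since every face of $G$---including the external one---has at least one source-switch, each patch inherits source-switches from the faces it merges, which lower-bounds $\Sigma$; combined with the facts that each patch carries at least one edge of $E_F^{1}$ on its boundary and that no edge of $E_F^{1}$ borders two patches, the claim follows, and the surplus of $2$ produced by Euler's constant together with the external face of $G_F$ exactly matches the $-2\cdot[\cdot]$ term.

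The hard part will be this last combinatorial step, and especially its degenerate cases: bridges, vertices of degree one, patches that are not simply connected, and a possibly disconnected $G_F$. Handling these forces a careful charging argument assigning to each ``lost'' internal vertex of $G$, and to each source or sink of $G$ missing from $F$, a distinct boundary edge of $E_F^{1}$ or a distinct source-switch created inside a patch; once that charging is in place the inequality---hence Hall's condition, hence the lemma---follows. (The extreme case $F=\{\text{all faces of }G\}$ is just the equality $|A|=|B|$ and can be verified directly.)
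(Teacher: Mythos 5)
Your overall skeleton coincides with the paper's: reduce to a perfect matching in the bipartite description (\Cref{le:perfect-matching}), check Hall's condition only for ``complete'' sets, pass to the subgraph $G_F$ induced by the boundary edges of the chosen faces, and compare capacities with a count over the extra faces (your ``patches'', the paper's $F''$) using \Cref{le:supply-capacity}/Euler-type bookkeeping. The one genuine difference is that you skip the quasi-triangulation preprocessing and keep face degrees general; that part of your plan is sound, and your preliminary computations ($\mathcal A(\hat f)=\mathcal A(f)+\delta(f)$, $|N(F)|=2|E_F|+|W_F|$, and the reduction of Hall's condition to $\sum_{f\in F}\mathcal A(f)\le |W_F|+|E_F^{1}|+|F|-2\alpha$) are correct.

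The gap is in the step you yourself call ``the hard part'', and the mechanism you propose for it is wrong. You lower-bound $\Sigma$ (the total switch count of the patches) by saying that, by face-acyclicity, ``each patch inherits source-switches from the faces it merges''. It does not: $\mathcal A_{G_F}(g)$ counts angles of $g$ \emph{as a face of $G_F$}, and switches of faces of $G$ lying strictly inside $g$ are not angles of $g$ at all. For example, if $G_F$ consists of a directed cycle $\eta$ together with a source inside joined to every vertex of $\eta$ (and $F$ is the set of inner triangles of a face-acyclic $G$), the unique patch is bounded by the directed cycle $\eta$ and has $\mathcal A_{G_F}=0$, although every face of $G$ it merges has a switch. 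What actually closes your inequality $|S_F|+p+2\alpha-2\le |E_F^{1}|+\Sigma$ is the paper's key observation (its Claim~3): every vertex that is internal in $G$ but a source or sink of $G_F$ lies on a patch boundary and \emph{all} of its angles in $G_F$ are switches, hence $\Sigma\ge |S_F|/2$; together with the facts that patch boundaries consist only of $E_F^{1}$ edge-sides, each used once (so $|S_F|\le |E_F^{1}|$ and $2p\le |E_F^{1}|$) and $2\alpha-2\le 0$, the estimate follows --- and, notably, face-acyclicity then plays no role in your triangulation-free route (the paper needs it only to pin down exact capacities of its degree-$2$/$3$/$4$ faces after quasi-triangulating). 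Two further inaccuracies: your final ``combinatorial claim'' should not contain the term ``sources and sinks of $G$ incident to no face of $F$'' --- the $|W_F|$ terms cancel on both sides, those vertices never enter the inequality, and with that term the claim is false in general; and the claim itself is asserted, not proved. So as written this is a plan whose decisive lemma is missing and whose proposed justification for it would fail.
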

\begin{proof}
	Let $G$ be a face-acyclic bimodal plane digraph. We assume that $G$ is a quasi-triangulation. If not, by \Cref{le:quasi-triangulation} we can augment $G$ to a quasi-triangulation and the statement follows because the $2$-subdivision of $G$ is a subgraph of the $2$-subdivision of the obtained quasi-triangulation. By \Cref{le:perfect-matching}, it suffices to prove that $H_{\hat{G}}$ has a perfect matching. According to Hall's theorem $H_{\hat{G}}=(A,B,E)$ has a perfect matching if and only if for each $A' \subseteq A$, we have that $|A'| \leq |N(A')|$, where $N(A') \subseteq B$ is the set of neighbors of the vertices in $A'$~\cite{hall}.
	Let $A'$ be a subset of $A$ and let $\{f_1, \dots,f_k\}$ be the faces with a representative vertex in $A'$. $A'$ is \emph{complete} if it contains all representative vertices for each face $f_i$ ($1 \leq i \leq k$).

\begin{figure}[t!]
	\centering
	\subfigure[]{
		\includegraphics[width=0.35\textwidth, page=3]{figures/quasi-triangulated}
		\label{fig:bipartite-description-a}
	}
	\hfil
	\subfigure[]{
		\includegraphics[width=0.32\textwidth, page=1]{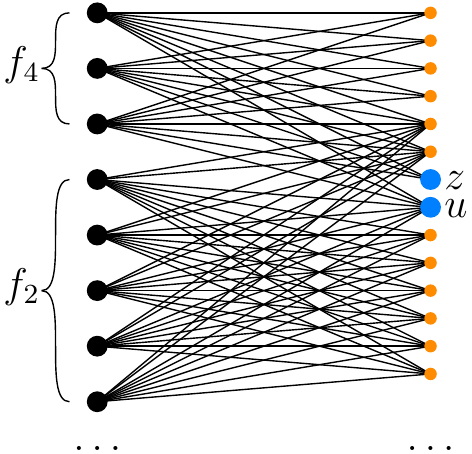}
		\label{fig:bipartite-description-b}
	}
	\caption{ (a) The $2$-subdivision $\hat{G}$ of the graph $G$ in \Cref{fig:quasi-triangulated-b}. (b) A portion of the bipartite description $H_{\hat{G}}$ of $\hat{G}$.
	}
	\label{fig:bipartite-description}
\end{figure}
	
	\begin{claimx}\label{cl:complete}
		Let $A'$ and $A''$ be two distinct subsets of $A$ that contain the representative vertices of the same set of faces. If $A'$ is complete and $|A'| \leq |N(A')|$, then  $|A''| \leq |N(A'')|$.
	\end{claimx}
	\noindent \emph{Proof.}
		Since $A'' \subseteq A'$ and since all the representative vertices of a face have the same neighbors in $B$, we have that $|A''| \leq |A'|$ and $N(A')=N(A'')$.\qedclaim
	
	\medskip


	By \Cref{cl:complete}, it is sufficient to prove that Hall's theorem holds for any complete subset $A'$ of $A$. Let $N_1(A') =\{v \in N(A')~|~v$ is a subdivision vertex of $\hat{G}\}$ and let $N_2(A')=N(A')\setminus N_1(A')$. Let $F$ be the set of the faces of $G$ and let $F' \subseteq F$
	be the set of faces whose representative vertices are in $A'$. We denote by $G_{F'}$ the bimodal plane subgraph of $G$ induced by the edges of the boundaries of the faces in $F'$. Note that $G_{F'}$ can have faces other than those in $F'$ (see \Cref{fig:quasi-triangulated-c,fig:quasi-triangulated-d}). Let $F''$ be the set of faces of $G_{F'}$ that are not in $F'$. Each face in $F''$ is the union of one or more faces of $F \setminus F'$. Further, let $F'_i=\{f \in F'~|~f \text{ has degree $i$ in $G$}\}$, for $i=2,3,4$. Let $E_b$ be the set of edges of $G$ shared by the faces in $F'$ and those in $F''$ (bold edges in \Cref{fig:quasi-triangulated-d}). Finally, we set $\alpha=1$ if the external face of $G$ belongs to $F'$, and $\alpha=0$ otherwise.
	
	\begin{claimx}\label{cl:n1-a}
		$|N_1(A')|-|A'|=|E_b|-|F'_4|-2\alpha$.
	\end{claimx}
	\noindent \emph{Proof.}
		For each edge of $G_{F'}$ there are two vertices in $N_1(A')$ because each edge has two subdivision vertices. Thus $|N_1(A')|=2|E_b|+2|E_x|$, where $E_x$ is the set of edges of $G_{F'}$ that are not in $E_b$.
	
\begin{figure}[t!]
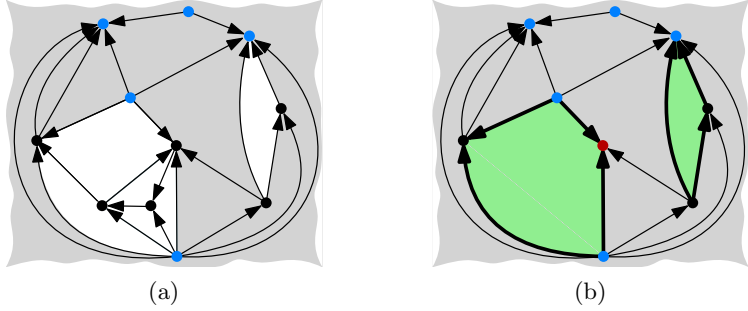

		\centering
		\subfigure[]{
			\includegraphics[width=0.35\textwidth, page=4]{figures/quasi-triangulated}
			\label{fig:quasi-triangulated-c}
		}
		\hfil
		\subfigure[]{
			\includegraphics[width=0.35\textwidth, page=5]{figures/quasi-triangulated}
			\label{fig:quasi-triangulated-d}
		}
		\caption{Illustration for \Cref{le:hall}. (a) Faces in $F'$ are gray.  (b) The graph $G_{F'}$; faces of $F''$ are green. The red vertex belongs to $S_{F'}$. The edges of $E_b$ are bold.}
		\label{fig:quasi-triangulated.2}
	\end{figure}

		Since $G$ is face-acyclic, each face of degree $2$ (resp. $3$ and $4$) in $G$ has capacity $2$ (resp. $3$ and $5$) in $\hat{G}$ if it is an internal face, and it has capacity $4$ (resp. $5$ and $7$) if it is the external face. It follows that for each face in $F'_2$ there are two vertices in $A'$, for each face in $F'_3$ there are three vertices in $A'$, and for each face in $F'_4$ there are five vertices in $A'$. If the external face of $G$ belongs to $F'$, then there are two additional vertices in $A'$.  Thus, $|A'|=2|F'_2|+3|F'_3|+5|F'_4|+2\alpha$ and $|N_1(A')|-|A'|=2|E_b|+2|E_x|-2|F'_2|-3|F'_3|-5|F'_4|-2\alpha$.
		
		Each face in $F'_i$ has $i$ edges in $G_{F'}$, for $i=2,3,4$. Each edge of $E_x$ belongs to two faces of $F'$, while each edge of $E_b$ belongs to only one face of $F'$. Thus we have $|E_b|+2|E_x|=2|F'_2|+3|F'_3|+4|F'_4|$ and therefore 
		$|N_1(A')|-|A'|=|E_b|+(2|F'_2|+3|F'_3|+4|F'_4|)-2|F'_2|-3|F'_3|-5|F'_4|-2\alpha=|E_b|-|F'_4|-2\alpha$.~\qedclaim
	
	\medskip
	
	Graph $G_{F'}$ can have some source or sink vertices that are not source or sink vertices of $G$ because $G_{F'}$ has a subset of the edges of $G$ (see for example the red vertex in \Cref{fig:quasi-triangulated-d}). Let $S_{F'}$ be the set of such vertices. Also, let $F''_2$ and $F''_3$ be the set of faces of $F''$ that have degree $2$ and $3$, respectively, in $G_{F'}$ and let $F''_x$ be the set $F'' \setminus (F''_3 \cup F''_2)$.
	
	\begin{claimx}\label{cl:n2}
		$|N_2(A')| \geq |F'_4|-\frac{|E_b|}{2}-\frac{|F''_3|}{2}-|F''_2|-|F''_x|+2$.
	\end{claimx}
	\noindent \emph{Proof.}
		By \Cref{le:supply-capacity}, the number of source and sink vertices of $G_{F'}$ is equal to the total capacity of the faces of $G_{F'}$. The number of source and sink vertices of $G_{F'}$ is $|N_2(A')| + |S_{F'}|$. The total capacity of the faces of $G_{F'}$ is given by two terms: The total capacity $C'$ of the faces in $F'$ plus the total capacity $C''$ of the faces in $F''$.  We have $C'=|F'_4|+2\alpha$. Indeed, let $f$ be a face of $F'$: If $f$ is internal it has capacity $0$ if it has degree $2$ or $3$ and it has capacity $1$ if it has degree $4$; if $f$ is the external face of $G_{F'}$, it has capacity $2$ if it has degree $2$ or $3$ and it has capacity $3$ if it has degree $4$. The term $C''$ is equal to $\sum_{f \in F''}(\mathcal A(f)-1)+2(1-\alpha)$, where $2(1-\alpha)$ takes into account the fact that the external face of $G_{F'}$ may belong to $F''$. (Recall that the capacity of a face is $\mathcal A(f)-1$ if it is internal and $\mathcal A(f)+1$ if it is external. Also, if $\alpha=1$ the external face of $G_{F'}$ belongs to $F'$, otherwise it belongs to $F''$.)
		
		Each vertex in $S_{F'}$ belongs to at least one face of $F''$. Let $v$ be a vertex of $S_{F'}$ and let $f$ be a face of $F''$ that contains $v$. Since $v$ is a source or a sink vertex, $f$ has at least one angle at $v$ that is either a source-switch or a sink-switch. In other words, for each vertex in $S_{F'}$ there is at least one source-switch or a sink-switch in a face of $F''$. Since at least $\frac{|S_{F'}|}{2}$ of these angles are source-switches or sink-switches, we have that $\sum_{f \in F''}\mathcal A(f) \geq \frac{|S_{F'}|}{2}$ (recall that $\mathcal A(f)$ is equal to the number of source-switches of $f$, which is equal to the number of sink-switches of $f$). It follows that $C''=\sum_{f \in F''}(\mathcal A(f)-1)+2(1-\alpha) \geq \frac{|S_{F'}|}{2}-|F''|+2(1-\alpha)$.
		
		Thus, we have $|N_2(A')| + |S_{F'}|=C'+C''\geq |F'_4|+2\alpha+\frac{|S_{F'}|}{2}-|F''|+2(1-\alpha)$. From $F''=F''_2 \cup F''_3 \cup F''_x$, we obtain $|N_2(A')| \geq |F'_4|-\frac{|S_{F'}|}{2}-|F''_2|-|F''_3|-|F''_x|+2$.
		Since each edge of $E_b$ is shared by a face of $F'$ and a face of $F''$ and since each face of $F''$ has only edges of $E_b$ in its boundary, we have that $\sum_{f \in F''}\delta(f)=|E_b|$. Also, each face in $F''_3$ has at most two switches and therefore at least one of its vertices does not belong to $|S_{F'}|$. It follows that $|S_{F'}| \leq \sum_{f \in F''}\delta(f)-|F''_3|=|E_b| -|F''_3|$, and therefore $|N_2(A')| \geq |F'_4|-\frac{|E_b|}{2}+\frac{|F''_3|}{2}-|F''_2|-|F''_3|-|F''_x|+2=|F'_4|-\frac{|E_b|}{2}-\frac{|F''_3|}{2}-|F''_2|-|F''_x|+2$.~\qedclaim
	\medskip
	
	In order to prove that the condition of Hall's theorem holds for $A'$, we will show that $|N(A')|-|A'|\geq 0$. By \Cref{cl:n1-a,cl:n2}, $|N(A')|-|A'|=|N_1(A')|+|N_2(A')|-|A'|\geq |E_b|-|F'_4|-2 \alpha + |F'_4|-\frac{|E_b|}{2}-\frac{|F''_3|}{2}-|F''_2|-|F''_x|+2   = \frac{|E_b|}{2} - \frac{|F''_3|}{2} - |F''_2| - |F''_x| +2 -2\alpha$.
	
	\smallskip
	
	\noindent Since the faces in $F''$ do not share edges, we have that $|E_b| \geq 2|F''_2|+3|F''_3|+4|F''_x|$, and therefore $|N(A')|-|A'| \geq \frac{2|F''_2|}{2} + \frac{3|F''_3|}{2} + \frac{4|F''_x|}{2} - \frac{|F''_3|}{2} - |F''_2| - |F''_x| +2-2\alpha=|F''_3|+|F''_x|+2-2\alpha$. Since $\alpha$ is either $0$ or $1$, we have that $|N(A')|-|A'| \geq |F''_3|+|F''_x| \geq 0$ and the condition of Hall's theorem holds.\qed
\end{proof}

The next theorem extends \Cref{le:hall} to graphs that may not be face-acyclic.

\begin{restatable}{theorem}{thhall}\label{th:hall}
	The $2$-subdivision of a bimodal plane digraph is upward \mbox{planar}.
\end{restatable}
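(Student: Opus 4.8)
The plan is to reduce \Cref{th:hall} to the face-acyclic case established in \Cref{le:hall} by a local augmentation that destroys every directed-cycle face boundary while leaving the relevant part of the $2$-subdivision untouched. Let $G$ be a bimodal plane digraph and let $f$ be a face of $G$ whose boundary is a directed cycle $v_1\to v_2\to\cdots\to v_k\to v_1$ (we may assume $G$ has no loops, so $k\ge 2$). I would pick the edge $(v_1,v_2)$ on $\partial f$ and insert inside $f$ a new sink vertex $w_f$ together with the two edges $(v_1,w_f)$ and $(v_2,w_f)$, placing $(v_1,w_f)$ in the rotation at $v_1$ immediately next to the outgoing edge of $\partial f$ at $v_1$, and likewise $(v_2,w_f)$ at $v_2$. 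A direct check shows that this keeps every vertex bimodal (at $v_1$ and at $v_2$ we only enlarge the outgoing block by one edge, and $w_f$ is a sink), and that it replaces $f$ by a triangular face on $(v_1,v_2),(v_2,w_f),(v_1,w_f)$ and a face of degree $k+1$; neither of these is a directed cycle, since in the first $w_f$ has two entering edges and in the second the boundary walk traverses $(v_2,w_f)$ against its orientation. Performing this operation once for every directed-cycle face of $G$ yields a connected bimodal plane digraph $G^\star$ that is face-acyclic.

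The second step is to transfer upward planarity from $G^\star$ back to $G$. By construction the $2$-subdivision $\hat G$ is an embedded sub-digraph of the $2$-subdivision $\widehat{G^\star}$: deleting from $\widehat{G^\star}$ every vertex $w_f$ together with the subdivision vertices of the edges $(v_1,w_f)$ and $(v_2,w_f)$ yields exactly $\hat G$, and the rotation system of $\hat G$ is the one inherited from $\widehat{G^\star}$, because the added edges were inserted in the interior of faces of $G$ and hence do not change the circular order at any vertex of $\hat G$ (if the outer face of $G$ is a directed cycle, we take as outer face of $G^\star$ the unbounded one of the two faces replacing it, so the outer faces match as well). Applying \Cref{le:hall} to $G^\star$ gives an embedding-preserving upward planar drawing $\Gamma^\star$ of $\widehat{G^\star}$; restricting $\Gamma^\star$ to the vertices and edges of $\hat G$ leaves every edge $y$-monotone and reproduces the rotation system and the outer face of $\hat G$, so $\hat G$ is upward planar, as claimed.

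I expect the only real difficulty to be pinning down the correct gadget for breaking a cyclic face. The naive candidates fail: adding a chord to $f$, or adding an internal new vertex of in- and out-degree one, merely replaces $f$ by a strictly shorter directed-cycle face together with a triangle, so the ``flow'' around the cycle is never interrupted and the augmentation does not terminate. Introducing $w_f$ as a \emph{local sink} (symmetrically, a local source) is precisely what removes the circularity, and the delicate part is to verify simultaneously that the insertion preserves bimodality at $v_1$ and $v_2$ and that both resulting faces are acyclic. Everything else is routine: the process terminates because each application strictly decreases the number of directed-cycle faces, and the ``subgraph plus restriction'' argument at the end is standard (and already used in the proof of \Cref{le:hall}).
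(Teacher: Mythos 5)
Your proposal is correct and takes essentially the same route as the paper: destroy every directed-cycle face boundary by inserting a new extremal vertex inside it, apply \Cref{le:hall} to the resulting face-acyclic bimodal plane digraph, and conclude because the $2$-subdivision of $G$ is a subgraph of that of the augmented graph. The only difference is cosmetic -- the paper inserts a source joined to \emph{all} vertices of the cyclic face, while you insert a sink joined to just two consecutive boundary vertices -- and both gadgets preserve bimodality and eliminate the cyclic faces equally well.
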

\begin{proof}
	Let $G$ be a bimodal plane digraph. If $G$ is face-acyclic the statement follows from \Cref{le:hall}. Otherwise, for every face $f$ of $G$ whose boundary is a cycle, we insert a source vertex $v$ inside $f$ and connect it to every vertex of the boundary of $f$. Let $G'$ be the resulting digraph. Clearly, $G'$ is a face-acyclic bimodal digraph and by \Cref{le:hall} the $2$-subdivision $\hat{G'}$ of $G'$ is upward planar. Since the $2$-subdivision of $G$ is a subgraph of $\hat{G'}$, the statement follows.\qed
\end{proof}

\section{Computing Minimum Curve Complexity Drawings}\label{se:unit-flow}

To efficiently compute quasi-upward planar drawings with minimum curve complexity, we define a variant of the flow network used by Bertolazzi et al.~\cite{DBLP:journals/algorithmica/BertolazziBD02}. Feasible flows in this network correspond to quasi-upward planar drawings. Intuitively, each unit of flow represents a large angle; large angles are produced by sources and sinks and are consumed by the faces.

Let $G$ be a bimodal plane digraph. The \emph{unit-capacity flow network of $G$}, denoted as $\N(G)$, is defined as follows (see \Cref{fig:flow-network}). For each edge $e$ of $\N(G)$, we denote by $\B{e}$, $\C{e}$ and $\f{e}$ the capacity, the cost, and the flow of \mbox{$e$, respectively}.

\begin{itemize}
	\item The nodes of $\N(G)$ are all the sources and sinks (\emph{vertex-nodes}), and all the faces (\emph{face-nodes}) of $G$.
	\item Each vertex-node $v$ of $\N(G)$ supplies a flow equal to $1$. This means that exactly one of the angles at $v$ must be large.
	\item Each face-node of $\N(G)$ that corresponds to a face $f$ demands a flow equal to the capacity of $f$. This means that $f$ must have a number of large angles equal to its capacity. If $f$ is an internal face and it is a directed cycle then $\mathcal A(f)=0$ and the capacity of $f$ is $-1$, that is, $f$ supplies a flow equal to $1$.
	\item For each source or sink $v$ that belongs to a face $f$, there is a \emph{vertex-to-face arc} $(v,f)$ in $\N(G)$ such that $\B{v,f}=1$ and $\C{v,f}=0$. Intuitively, a unit of flow on this arc means that $f$ has a large angle at $v$.
	\item For each edge $e$ of $G$ shared by two faces $f$ and $g$, there is a pair of \emph{face-to-face arcs} $(f,g)$ and $(g,f)$ in $\N(G)$ such that $\B{f,g}=\B{g,f}=1$ and $\C{f,g}=\C{g,f}=2$. Intuitively, a unit of flow on $(f,g)$ or on $(g,f)$ represents the insertion of two bends along $e$. This corresponds to two units of cost. The two arcs $(f,g)$ and $(g,f)$ are called the \emph{dual arcs} of edge $e$.
\end{itemize}


\begin{figure}[t!]
	\centering
	\subfigure[]{
		\includegraphics[width=0.3\textwidth, page=1]{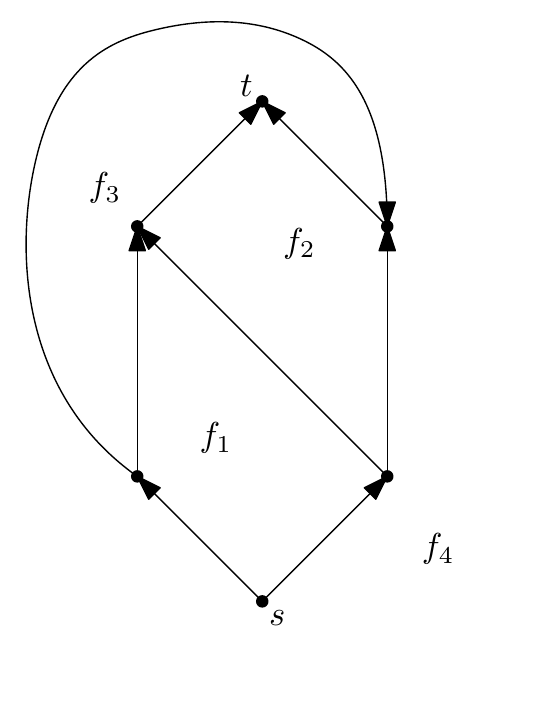}
		\label{fig:flow-network-a}
	}
	\hfil
	\subfigure[]{
		\includegraphics[width=0.3\textwidth, page=2]{figures/flow-network}
		\label{fig:flow-network-b}
	}
	\hfil
	\subfigure[]{
		\includegraphics[width=0.3\textwidth, page=3]{figures/flow-network}
		\label{fig:flow-network-c}
	}
	\caption{(a) A bimodal planar embedding of the graph $G$ of \Cref{fig:quasi-upward-a}. (b) The unit-capacity flow network $\N(G)$. The dashed arcs (vertex-to-face arcs) have cost 0, while solid arcs (face-to-face arcs) have cost $2$. The number close to each face-node indicates the capacity of the face.  (c) A feasible flow $\fl$ for $\N(G)$. A unit of flow traverses the edges highlighted in bold. The cost of the flow is $2$. A $2$-bend drawing of $G$ corresponding to $\fl$ is the one shown in \Cref{fig:quasi-upward-b}. }
	\label{fig:flow-network}
\end{figure}

We remark that the main differences between the flow network defined above and the flow network $\NB(G)$ defined by Bertolazzi et al.~\cite{DBLP:journals/algorithmica/BertolazziBD02} are as follows: \emph{(i)} In $\N(G)$ we have two opposite face-to-face arcs for each edge $e$ of $G$, while in $\NB(G)$ there are two opposite face-to-face arcs for each pair of adjacent faces, even when they share more than one edge; \emph{(ii)} the capacity of the face-to-face arcs is one in $\N(G)$ and it is unbounded in $\NB(G)$. The fact that $\N(G)$ is well-defined is a consequence of \Cref{le:supply-capacity}. The following lemma will be used to prove that a quasi-upward planar drawing with minimum curve complexity can be computed by means of the flow network $\N(G)$.

\begin{restatable}[$\star$]{lemma}{leflow}\label{le:flow}
	Let $G$ be a bimodal plane digraph. For each feasible flow $\fl_u$ in $\NB_u(G)$ there exists a quasi-upward planar drawing $\Gamma$ of $G$ such that the number of bends along each edge $e$ of $G$ is equal to the sum of the costs of the flows along the two face-to-face arcs that are the dual arcs of $e$.
\end{restatable}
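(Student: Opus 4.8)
The plan is to establish a correspondence between feasible flows in $\N(G)$ and upward consistent assignments of an appropriate subdivision of $G$, and then invoke the characterization of Bertolazzi, Battista, Liotta, and Mannino to obtain the drawing. Concretely, given a feasible flow $\fl_u$, for each edge $e$ of $G$ I would read off the total flow $b_e \in \{0,1\}$ on the pair of dual arcs of $e$ (unit capacities force each individual arc to carry $0$ or $1$, and a cost-minimizing—or indeed any feasible—flow will not simultaneously saturate both opposite arcs of the same edge, so $b_e$ is well defined and the cost contributed by $e$ is $2b_e$). I then subdivide each such edge $e$ with $b_e = 1$ twice, exactly as in the $2$-subdivision construction of Section \ref{se:combinatoric}, introducing a source $s_e$ and a sink $t_e$; edges with $b_e = 0$ are left untouched. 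Call the resulting bimodal plane digraph $G_\fl$. The flow $\fl_u$ should then be transported into a flow (equivalently, an $L$-label assignment) on $G_\fl$: the vertex-to-face units stay as they are, and the face-to-face unit on, say, $(f,g)$ dual to $e$ is reinterpreted as assigning the new source $s_e$ and the new sink $t_e$ to the faces $f$ and $g$ in the appropriate way (one of them to each side of $e$).

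The key steps, in order, are: (1) verify that $G_\fl$ is bimodal and acyclic—bimodality is clear since each subdivision vertex has degree two, and acyclicity follows because every subdivided edge now passes through a source and a sink, breaking all directed cycles through that edge, while edges that remain on a directed cycle of $G$ are exactly those not subdivided and these cannot by themselves close a cycle once one checks the flow conservation constraints have removed the offending configurations (this mirrors the remark after the definition of the $2$-subdivision that $\hat G$ is acyclic even when $G$ is not); (2) check that the induced assignment on $G_\fl$ is an upward consistent assignment, i.e.\ each source/sink is assigned to exactly one incident face and each face receives a number of assigned sources/sinks equal to its capacity. The per-vertex condition for original sources/sinks is exactly the unit supply constraint of $\N(G)$; for the new vertices $s_e, t_e$ it is the unit capacity of the dual arcs. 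The per-face condition is where the bookkeeping happens: a face $f$ of $G$ splits, after subdivision, into a face of $G_\fl$ whose capacity has increased by exactly the number of units of face-to-face flow entering $f$ minus those leaving $f$ along edges incident to $f$, because each extra source-switch/sink-switch created on the boundary of $f$ by a subdivision of an incident edge changes $\mathcal A(\cdot)$ by one; flow conservation at the face-node $f$ in $\N(G)$ is precisely the statement that this net change equals the number of new sources/sinks routed into $f$. (3) Apply the theorem of Bertolazzi et al.\ (quoted in Section \ref{se:combinatoric}): $G_\fl$, being bimodal, acyclic, and equipped with an upward consistent assignment, is upward planar, hence has an embedding-preserving straight-line upward planar drawing. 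Replacing each subdivision pair $s_e, t_e$ by a smooth curve through those two points yields a quasi-upward planar drawing $\Gamma$ of $G$ in which edge $e$ has exactly $b_e$ pairs of horizontal tangents, i.e.\ $2 b_e$ bends, which equals the total cost of the two dual arcs of $e$, as claimed.

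I expect the main obstacle to be step (2), specifically the per-face capacity accounting: one must carefully track how subdividing an incident edge of $f$ alters the switch structure (and therefore $\mathcal A(f)$ and the capacity) of $f$, and handle the direction of the face-to-face arc correctly so that the source $s_e$ lands in the face on one side of $e$ and the sink $t_e$ in the face on the other side, consistently with which of the two dual arcs $(f,g)$ or $(g,f)$ carries the unit. A secondary subtlety is the external-face bookkeeping (its capacity is $\mathcal A(f)+1$ rather than $\mathcal A(f)-1$) and the degenerate case of an internal face that is a directed cycle, which supplies rather than demands a unit—these need to be folded into the conservation check so that $\N(G)$ being well-defined (via Lemma \ref{le:supply-capacity}) translates into the total supply and demand matching on $G_\fl$ as well. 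Once the correspondence is set up cleanly, the rest is routine.
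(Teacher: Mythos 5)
Your overall architecture (read off the saturated dual arcs, $2$-subdivide those edges, transport the flow to an upward consistent assignment, invoke the characterization of Bertolazzi, Di Battista, Liotta and Mannino, then smooth the subdivision vertices) can be made to work, but the bookkeeping you commit to in step (2) is wrong, and that is precisely the step where the proof lives. Subdividing an edge $e$ shared by faces $f$ and $g$ adds one source-switch and one sink-switch to \emph{each} of $f$ and $g$, so each face's capacity grows by one while two new supply units ($s_e$ and $t_e$) appear. Flow conservation at the face-node $f$ of $\N(G)$ says that the number of original sources/sinks assigned to $f$ equals $\mathrm{cap}(f)-(\text{units entering }f)+(\text{units leaving }f)$, whereas after subdivision $f$ must receive $\mathrm{cap}(f)+(\text{entering})+(\text{leaving})$ large angles in total; hence the new vertices assigned to $f$ must number exactly $2\cdot(\text{units entering }f)$. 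In other words, a unit on the dual arc $(f,g)$ must send \emph{both} $s_e$ and $t_e$ to the head face $g$, not ``one of them to each side of $e$'' as you prescribe (and reiterate in your obstacle paragraph). With the one-per-side rule the face condition of the upward consistent assignment fails whenever entering and leaving flow at a face differ. A minimal counterexample is the directed triangle: the unique feasible flow sends one unit from the inner face to the outer face across one edge; after subdividing that edge the inner face has capacity $0$ and the outer face capacity $2$, so both subdivision vertices must be assigned to the outer face, and your assignment is infeasible.

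Two further points. First, your claim that \emph{any} feasible flow never saturates both opposite dual arcs of an edge is false: adding the circulation around the $2$-cycle $(f,g),(g,f)$ preserves feasibility, and the lemma as stated must then produce four bends on that edge, so you would need either a $4$-subdivision in that case or an explicit restriction to flows without such circulations. Second, acyclicity of $G_\fl$ is asserted rather than proved; it requires a cut argument showing that every directed cycle of $G$ has at least one boundary edge whose dual arcs carry flow (compare, via \Cref{le:supply-capacity}, the supply of the sources/sinks strictly inside the cycle with the total capacity of the inside faces: the former exceeds the latter by one, so one unit must cross the cycle). Finally, note that the paper's own proof sidesteps all of this: it observes that $\N(G)$ is a unit-capacity restriction of a per-edge refinement $\NB^+(G)$ of the network of Bertolazzi et al., so every feasible flow of $\N(G)$ is feasible there, and Lemma~7 of that paper already supplies the quasi-upward planar drawing with the stated per-edge bend count. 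Your route is more self-contained, but it must re-derive exactly that correspondence, and as written it does not.
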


The next theorem gives the main result of this section and exploits the algorithm by Karczmarz and Sankowski~\cite{DBLP:conf/esa/KarczmarzS19} to compute a min-cost flow on a planar unit-capacity flow network.

\begin{restatable}[$\star$]{theorem}{thmain}\label{th:main}
	Let $G$ be a bimodal plane digraph with $m$ edges. There exists an ${\tilde{O}}(m^\frac{4}{3})$-time algorithm that computes a quasi-upward planar drawing $\Gamma$ of $G$ with the following properties:
	\begin{inparaenum}[(i)]
		\item $\Gamma$ has minimum curve complexity, which is at most two.
		\item $\Gamma$ has the minimum number of bends among the quasi-upward planar drawings of $G$ with minimum curve complexity.
	\end{inparaenum}
\end{restatable}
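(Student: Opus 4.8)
The plan is to combine the three ingredients the paper has just assembled—\Cref{th:hall} (upward planarity of the $2$-subdivision), \Cref{le:flow} (feasible flows in $\N(G)$ yield quasi-upward planar drawings with the right per-edge bend counts), and the planar unit-capacity min-cost flow algorithm of Karczmarz and Sankowski—into one algorithm. First I would argue that $\N(G)$ has a feasible flow \emph{at all}: this follows from \Cref{th:hall}, because the $2$-subdivision $\hat G$ being upward planar gives, via \Cref{le:supply-capacity} and the upward consistent assignment, a way to route one unit from each source/sink to an incident face while respecting capacities; translating this assignment back to $G$ (each subdivision vertex $s_e$, $t_e$ sits inside a face of $\hat G$ that lies on one side of the original edge $e$) yields a flow on $\N(G)$ in which at most one unit crosses each edge $e$ in either direction, i.e.\ a feasible flow of finite cost with curve complexity at most two. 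Because every face-to-face arc has capacity one and cost two, the cost of \emph{any} feasible flow is exactly $2k$ where $k$ is the number of edges carrying a bend pair, so a minimum-cost feasible flow simultaneously (a) exists, (b) puts at most two bends on each edge—hence minimum curve complexity, since curve complexity in a quasi-upward drawing is an even number and not every bimodal plane digraph is upward planar—and (c) minimizes the total number of bends subject to that constraint.

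Next I would spell out the reduction to min-cost flow and the running-time bookkeeping. The network $\N(G)$ is planar: its nodes are the sources/sinks and faces of $G$, vertex-to-face arcs are incident to vertices lying on face boundaries, and face-to-face arcs are the pairs of dual arcs, so $\N(G)$ embeds in the plane alongside $G$ (a standard primal–dual overlay argument). All capacities are $1$ except possibly the supplies/demands at nodes, which are $O(1)$ at vertex-nodes and bounded by face capacities at face-nodes; I would either note that Karczmarz–Sankowski handle bounded integer supplies directly, or reduce to true unit capacities by splitting each face-node of demand $d$ into $d$ copies joined appropriately, or equivalently attach a super-source/super-sink with unit-capacity arcs—any of these keeps the network planar and increases its size only by a constant factor, since $\sum_f (\text{capacity of }f) = \#(\text{sources and sinks}) = O(m)$ by \Cref{le:supply-capacity}. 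Hence $\N(G)$ has $O(m)$ nodes and arcs, and the Karczmarz–Sankowski algorithm computes a min-cost flow in $\tilde O(m^{4/3})$ time. Finally, feeding the resulting optimal flow $\fl$ into \Cref{le:flow} produces the drawing $\Gamma$ of $G$ whose per-edge bend count equals the cost contributed by the two dual arcs of that edge; minimality of the flow cost then gives both properties (i) and (ii) in the statement. Converting $\Gamma$ from a curve drawing to a poly-line drawing (replacing horizontal-tangency points by degree-two vertices and invoking the straight-line upward drawing algorithm) is routine and does not affect the bounds.

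The main obstacle I expect is the interface between the combinatorial existence result and the flow model: namely, arguing cleanly that the minimum-cost feasible flow in $\N(G)$ has cost at most $2m$—equivalently, that $\N(G)$ always admits \emph{some} feasible flow of finite cost—using \Cref{th:hall}, and that per-edge flow at most one (guaranteed by the unit capacities) is exactly what encodes ``at most two bends per edge.'' This is where the two structural differences between $\N(G)$ and the Bertolazzi–Binucci–Didimo network $\NB(G)$ matter: the per-edge (rather than per-face-pair) dual arcs ensure bends are counted edge-by-edge, and the unit capacities cap the curve complexity at two, so I must verify that restricting to unit capacities does not destroy feasibility. That verification is precisely the content of \Cref{th:hall}: the upward consistent assignment of $\hat G$ never forces more than one "large-angle unit" across any original edge, because each original edge $e$ contributes only the two subdivision vertices $s_e,t_e$ and the faces of $\hat G$ incident to them. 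A secondary, purely bookkeeping hurdle is confirming that the Karczmarz–Sankowski bound applies verbatim to our network after the supply/demand normalization; I would handle this by keeping the normalization gadgets planar and $O(1)$-sized per node, so that $|V(\N(G))| + |E(\N(G))| = \Theta(m)$ and the $\tilde O((|V|+|E|)^{4/3}) = \tilde O(m^{4/3})$ bound carries over.
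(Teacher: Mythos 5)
Your overall route coincides with the paper's: compute a min-cost flow on the unit-capacity network $\N(G)$ with the Karczmarz--Sankowski algorithm and turn it into a drawing via \Cref{le:flow}; your explicit use of \Cref{th:hall} to certify that a feasible flow exists at all is a point the paper leaves implicit, and it is a reasonable addition. However, two steps in your argument for property (i) need repair. First, ``at most two bends per edge'' does not follow from the unit capacities alone: every edge $e$ of $G$ has \emph{two} dual arcs $(f,g)$ and $(g,f)$, each of capacity one, so a feasible flow may push one unit on each of them and thereby place four bends on $e$; in particular your claim that the cost of \emph{any} feasible flow equals $2k$ with $k$ the number of edges carrying a bend pair is false for such flows, and conclusion (b) is asserted without justification. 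The paper closes exactly this hole with a cancellation argument: in a \emph{minimum-cost} flow the two dual arcs of an edge can never both be saturated, because removing one unit from each preserves conservation and capacities and strictly decreases the cost. You need this (one-sentence) argument, applied to the optimal flow, before concluding curve complexity at most two.

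Second, your justification of \emph{minimality} of the curve complexity (parity of the number of bends plus the existence of non-upward-planar bimodal digraphs) only shows that two is worst-case optimal over all inputs; it does not show optimality for the particular input $G$. The case that must be handled is when $G$ admits an embedding-preserving upward planar (zero-bend) drawing: then the zero flow on all face-to-face arcs is feasible, these are the only arcs with positive cost, so the min-cost flow has cost zero and the algorithm outputs a bend-free drawing --- this is the argument the paper gives, and it is what your closing remark ``minimality of the flow cost gives (i)'' should be expanded into. A further caution on your supply/demand normalization: attaching a single super-source/super-sink with unit-capacity arcs does \emph{not} in general preserve planarity, since the sources, sinks, and faces of $G$ are scattered over the plane; if you normalize at all it must be done locally, whereas the paper simply applies the Karczmarz--Sankowski bound to $\N(G)$ itself, using \Cref{le:supply-capacity} only to ensure that supplies and demands balance and that the network has $O(m)$ size with $O(1)$ costs.
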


If $G$ does not have homotopic multiple edges (i.e., multiple edges that define a face), then $m \in O(n)$ and the time complexity of \Cref{th:main} is ${\tilde{O}}(n^\frac{4}{3})$.

\section{A Lower Bound on the Curve Complexity}\label{se:lowerbound}

By \Cref{th:main}, every bimodal plane digraph admits a quasi-upward planar drawing with curve complexity two, which is worst-case optimal. One may wonder whether curve complexity two and minimum total number of bends can be simultaneously achieved. The next lemma shows that this is not always~possible.

\begin{figure}[t!]
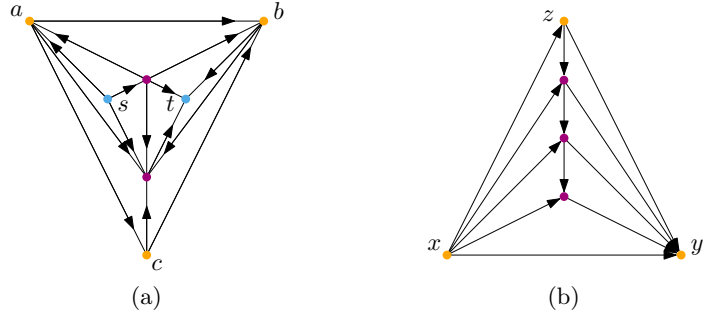

	\centering
	\subfigure[]{
		\includegraphics[width=0.32\textwidth, page=1]{figures/counterexample}
		\label{fig:counterexample-a}
	}
	\hfil
	\subfigure[]{
		\includegraphics[width=0.32\textwidth, page=2]{figures/counterexample}
		\label{fig:counterexample-b}
	}
	\caption{(a) Supplier gadget.  (b) Barrier gadget.}
	\label{fig:counterexample}
\end{figure}

\begin{figure}[p]
	\centering
	\includegraphics[width=\textwidth, page=3]{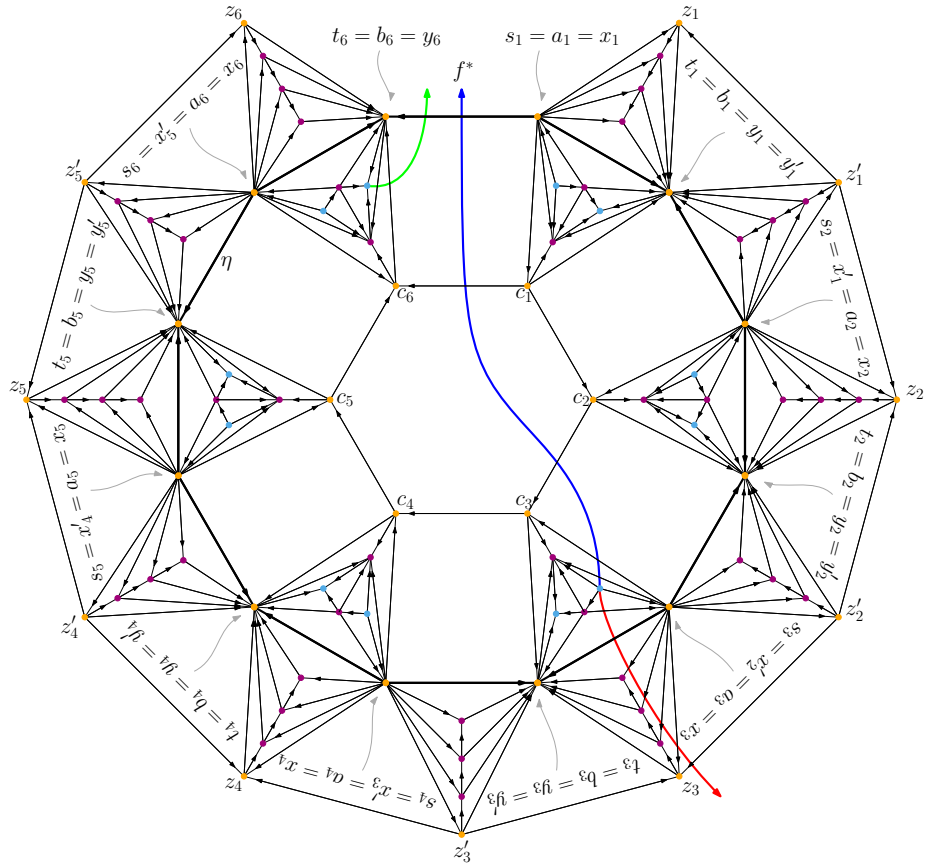}
	\caption{\label{fig:counterexample-c}Graph $G_6$ in the proof of \Cref{le:lowerbound}. The blue arrow represents a path in the dual that is shorter than the one represented by the red arrow. The green arrow also represents a shortest path.}
\end{figure}

\begin{restatable}[$\star$]{lemma}{lelowerbound}\label{le:lowerbound}
	For every integer $k \geq 3$, there exists a bimodal planar acyclic digraph $G_k$ with $14k-3$ vertices and $41k-13$ edges such that every quasi-upward planar drawing of $G_k$ with the minimum number of bends has one edge with at least $2k-2$ bends.
\end{restatable}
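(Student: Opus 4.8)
The plan is to reason entirely through the flow network $\NB(G)$ of Bertolazzi et al.~\cite{DBLP:journals/algorithmica/BertolazziBD02}. Recall that, for a bimodal plane digraph $G$, the feasible flows of $\NB(G)$ are in bijection with the quasi-upward planar drawings of $G$, that every face-to-face arc has cost $2$ per unit of flow, and that one unit of flow on a face-to-face arc dual to an edge $e$ inserts two bends on $e$ (cf.~\Cref{le:flow}); hence a drawing with the minimum number of bends corresponds to a minimum-cost feasible flow of $\NB(G)$. Since the bound must hold in the variable embedding setting, I would build $G_k$ out of $3$-connected building blocks so that all of its bimodal planar embeddings coincide up to reflection and the choice of the outer face, and then check that the argument below does not depend on these choices.

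The digraph $G_k$ is assembled from the two gadgets of \Cref{fig:counterexample}. The \emph{supplier gadget} is a small rigid bimodal acyclic plane digraph containing a source and a sink whose large angles cannot be absorbed locally (all of its internal faces being already saturated by its own sources and sinks), so that in every feasible flow a fixed amount of surplus supply must leave the gadget. The \emph{barrier gadget} is a rigid bimodal acyclic plane digraph whose internal faces are all locally balanced and which is designed so that flow can pass from one side of it to the other cheaply only through a single edge $b$, every other dual route having to travel all the way around the gadget. I would then chain $\Theta(k)$ supplier gadgets and $\Theta(k)$ barrier gadgets so that (i) consecutive supplier gadgets are separated by a barrier gadget, (ii) all the barrier gadgets share the same ``gate'' edge $e^{*}$, and (iii) the only face-nodes able to consume the surplus produced by the suppliers lie on the far side of all the barriers (for instance a single large face, or the outer face). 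A direct count of the gadget sizes then yields $14k-3$ vertices and $41k-13$ edges; moreover $G_k$ is planar, bimodal and acyclic by construction, and, using \Cref{le:supply-capacity}, its total supply equals its total demand, so $\NB(G_k)$ admits a feasible flow for every bimodal planar embedding.

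The heart of the proof is the analysis of the minimum-cost flows of $\NB(G_k)$. Decomposing a feasible flow into supply-to-demand paths, each such path leaves a supplier gadget, crosses every barrier between that supplier and the consuming region, and enters a face-node on the outside; since the internal faces of the suppliers and of the barriers are saturated, the surplus originating inside cannot be absorbed before reaching the outside, and since a barrier can be crossed cheaply only through $e^{*}$, every minimum-cost flow routes all the surplus through the two dual arcs of the single edge $e^{*}$, which a count shows to amount to at least $k-1$ units, hence at least $2k-2$ bends on $e^{*}$. I would then exhibit such a flow and check that it has the minimum possible cost, so these concentrated flows are precisely the bend-minimum ones; consequently \emph{every} bend-minimum quasi-upward planar drawing of $G_k$ puts at least $2k-2$ bends on $e^{*}$. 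The main obstacle is to prove that no cheaper feasible flow exists that either splits the surplus over several edges or reroutes it around the barriers: this requires comparing shortest-path lengths in the dual and showing that going the long way around any barrier is strictly more expensive than paying the cost $2$ per unit at $e^{*}$ — precisely the comparison between the blue (or green) and the red dual paths in \Cref{fig:counterexample-c} — and it must be carried out for every admissible choice of the outer face so that the lower bound also holds in the variable embedding setting.
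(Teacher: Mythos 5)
Your plan is essentially the paper's own proof: the same reduction to a min-cost flow on $\NB(G_k)$, the same supplier and barrier gadgets of \Cref{fig:counterexample}, a separating structure that forces the surplus flow through a single gate edge, and triconnectivity to handle the variable embedding setting. So the approach is the right one; the issue is that what you explicitly defer as ``the main obstacle'' is the actual content of the lemma, and it is not carried out. You never fix the arrangement of the gadgets concretely, so neither the counts $14k-3$ and $41k-13$, nor the claim that at least $k-1$ units of flow must cross the gate edge, nor the claim that a minimum-cost flow concentrates them there, are verified; as it stands the argument is an outline, not a proof.

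For comparison, the paper's construction is a cycle $\eta=s_1,t_1,\dots,s_k,t_k$ with a supplier gadget inside each edge $(s_i,t_i)$ and a barrier gadget outside \emph{every} edge of $\eta$ except the single edge $(s_1,t_k)$; so the barriers do not ``share'' a gate edge, as you write --- the gate is the unique edge of $\eta$ left unguarded. The quota of $k-1$ units is obtained not from a saturation property of the gadgets' internal faces but from the demand side: the face $f^*$ bounded by the barrier apices has $k$ source-switches, hence capacity at least $k-1$ whichever face is chosen as the outer one, while $\eta$ separates $f^*$ from all sources and sinks, so at least $k-1$ units must cross dual arcs of edges of $\eta$. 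Finally, the concentration of all these units on $(s_1,t_k)$ rests on an explicit dual-distance comparison (a path of length $5$, or $3$ for the extreme source and sink, through the gate, versus strictly longer paths through any barrier, cf.\ the blue, green and red paths in \Cref{fig:counterexample-c}). These explicit verifications are exactly the steps your proposal leaves open, and without them the stated bounds cannot be concluded.
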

\begin{sketch}
For each $k \geq 3$, we construct a graph $G_k$ by suitably combining different copies of two gadgets. The first gadget is shown in \Cref{fig:counterexample-a} and it is called
\emph{supplier gadget} because it contains one source and one sink vertex, denoted as $s$ and $t$ in \Cref{fig:counterexample-a}, that supply two units of flow. Graph $G_k$ has $k$ copies of the supplier gadget that in total supply $2k$ units of flow; $G_k$ is such that $k-1$ units of this flow have to reach a specific face. To force these $k-1$ units of flow to ``traverse'' the same edge (thus creating $2k-2$ bends along this edge), we use the second gadget, shown in \Cref{fig:counterexample-b}. This gadget is called \emph{barrier gadget} because it is used to prevent the flow to traverse some edges. Graph $G_k$ is shown in \Cref{fig:counterexample-c} for $k=6$. 
\ifArxiv
See the appendix for more details on the~construction~of~$G_k$.
\else
See~\cite{DBLP:journals/corr/abs-xxxx-xxxxx} for more details on the~construction~of~$G_k$.
\fi


We now prove that every quasi-upward planar drawing of $G_k$ with the minimum number of bends has at least one edge with at least $2k-2$ bends. Let $\Gamma$ be any quasi-upward planar drawing of $G_k$ with the minimum number of bends and let $\psi$ be the planar embedding of $\Gamma$. Drawing $\Gamma$ corresponds to a minimum cost flow on the flow network $\NB(G_k)$ defined on the planar embedding $\psi$~\cite{DBLP:journals/algorithmica/BertolazziBD02}. Since $G_k$ is triconnected, all its planar embeddings have the same set of faces and each embedding is defined by the choice of a face as the external one. Let $f^*$ be the face that is external in the embedding of $G_k$ shown in \Cref{fig:counterexample-c}. Face $f^*$ has $k$ source-switches, which implies that its capacity is at least $k-1$ in $\NB(G_k)$: Namely, it is $k+1$ if $f^*$ is the external face in $\psi$ and $k-1$ otherwise.
Since there are one source and one sink in each of the $k$ supplier gadgets, the total amount of flow consumed by the faces is $2k$. In any planar embedding of $G_k$ the cycle $\eta$, highlighted by bold edges in \Cref{fig:counterexample-c}, separates the source and sink vertices from $f^*$. It follows that at least $k-1$ units of flow must go through the dual arcs of the edges of $\eta$, thus creating at least $2k-2$ bends along the edges of $\eta$. We now prove that all these bends are on the edge $(s_1,t_k)$.
Consider a unit flow that goes from a source or a sink node $v$ to $f^*$ following a path $\pi$ in $\NB(G_k)$; the cost of sending this unit of flow along $\pi$ is equal to the number of face-to-face arcs in  $\pi$ because face-to-face arcs have cost two. Each face-to-face arc of $\pi$ is the dual arc of an edge in $G_k$ shared by two adjacent faces. Hence, to obtain a minimum cost flow, each unit of flow that goes from a source or a sink node $v$ to $f^*$ in $\NB(G_k)$ follows a shortest path $\pi$ in the dual graph of $G_k$ connecting a face incident to $v$ with $f^*$. The lemma holds because any shortest path in the dual graph of $G_k$ connecting a face incident to $v$ with $f^*$ includes a dual arc of the edge $(s_1,t_k)$. 
\ifArxiv
(See the appendix for more details.)~\hfill$\qed$\endproof
\else
(See~\cite{DBLP:journals/corr/abs-xxxx-xxxxx} for more details.)~\hfill$\qed$\endproof
\fi
\end{sketch}

The next theorem extends the result of \Cref{le:lowerbound} to the cases when $n$ is not equal to $14k-3$, for some $k>0$.

\begin{restatable}[$\star$]{theorem}{thlower}\label{th:lower}
	For every $n \geq 39$ there exists a bimodal planar digraph with $n$ vertices such that every bend-minimum quasi-upward planar drawing has curve complexity at least $\frac{n-24}{7}$.
\end{restatable}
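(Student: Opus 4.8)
The plan is to reduce the general case to \Cref{le:lowerbound} by padding the digraph $G_k$ with a small, controlled number of pendant vertices. Given $n\ge 39$, set $k=\lfloor (n+3)/14\rfloor$. Then $14k-3\le n\le 14k+10$, and since $n\ge 39$ we have $k\ge 3$, so $G_k$ is defined. Let $j=n-(14k-3)$, so $0\le j\le 13$. I would build $G_k^+$ from $G_k$ by fixing a source vertex of $G_k$ (for instance $s_1$, a source of one of the supplier gadgets) and attaching to it $j$ new vertices $w_1,\dots,w_j$, each joined to $s_1$ by an edge oriented from $s_1$ to $w_i$ and inserted into the rotation at $s_1$ inside a single face incident to $s_1$. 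Adding outgoing edges at a source keeps the digraph bimodal and planar, and $G_k^+$ is connected with exactly $14k-3+j=n$ vertices. Each $w_i$ is a sink of degree one.

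The heart of the argument is that a bend-minimum quasi-upward planar drawing of $G_k^+$ has exactly the same total number of bends as a bend-minimum quasi-upward planar drawing of $G_k$. In one direction, starting from a bend-minimum drawing of $G_k$, each pendant edge $(s_1,w_i)$ can be drawn with zero bends by placing $w_i$ very close to $s_1$, inside an open wedge above the horizontal line through $s_1$ delimited by two consecutive edges leaving $s_1$ (such a wedge, free of the rest of the drawing, exists because $s_1$ is a source), and drawing $(s_1,w_i)$ as a short $y$-monotone segment; for $w_i$ close enough to $s_1$ this creates no crossing. In the other direction I would observe that in \emph{any} quasi-upward planar drawing of $G_k^+$ every pendant edge can be ``unbent'' to zero bends by exactly this local re-placement of its degree-one endpoint, an operation that modifies only that pendant edge and vertex and never increases the total number of bends. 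Consequently, given any bend-minimum drawing $\Gamma^+$ of $G_k^+$, we may assume its pendant edges carry no bends; deleting $w_1,\dots,w_j$ and their edges then yields a quasi-upward planar drawing of $G_k$ with the same total number of bends, and this must be bend-minimum for $G_k$ — otherwise reattaching zero-bend pendants to a better drawing of $G_k$ would beat $\Gamma^+$.

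The theorem then follows quickly. For any bend-minimum $\Gamma^+$ of $G_k^+$, the unbending step makes its restriction to $G_k$ bend-minimum for $G_k$, so by \Cref{le:lowerbound} some edge of $G_k$ carries at least $2k-2$ bends, and that edge (untouched by unbending) still carries at least $2k-2$ bends in $\Gamma^+$; hence every bend-minimum quasi-upward planar drawing of $G_k^+$ has curve complexity at least $2k-2$. Finally, $n\le 14k+10$ gives $n-24\le 14k-14$, i.e.\ $2k-2\ge\frac{n-24}{7}$, which is the claimed bound (tight for this construction precisely when $n=14k+10$). I would state the result over all planar embeddings, matching the variable-embedding nature of \Cref{le:lowerbound}.

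The step that needs the most care is the unbending argument: one must check that the local re-placement of a degree-one vertex is always possible no matter how cramped $\Gamma^+$ is (this is where being a source matters — there is always an empty wedge above $s_1$), that it introduces no crossing, that the modified drawing is still quasi-upward planar, and that removing the pendants returns a genuine quasi-upward planar drawing of $G_k$. Everything else — the choice of $k$, the vertex count, and the concluding inequality — is routine bookkeeping.
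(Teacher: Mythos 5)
Your proof is correct in substance, but it takes a genuinely different route from the paper. The paper pads $G_k$ with $h\le 13$ pendant sinks attached to the barrier-gadget apex $z'_1$ and then re-enters the flow argument of \Cref{le:lowerbound}: each new sink supplies one unit that is absorbed at zero cost by the extra capacity of its incident face, and one must re-check that the shortest paths in the dual from the supplier sources and sinks to $f^*$ are still shorter than paths to the faces touched by the new vertices, so that the $k-1$ units still cross $(s_1,t_k)$. You instead give a purely drawing-based reduction: pendant edges can always be redrawn with zero bends by a local re-placement of their degree-one endpoint, so a bend-minimum drawing of the padded graph restricts (after this harmless surgery) to a bend-minimum drawing of $G_k$, and \Cref{le:lowerbound} is invoked as a black box. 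This is more modular — you never have to reopen the flow analysis or care where the pendants are attached — at the price of an unbending lemma that you rightly flag as the delicate step; if you want to avoid the geometric surgery altogether, note that a pendant edge has the same face on both sides, so in the bend-minimization flow network of~\cite{DBLP:journals/algorithmica/BertolazziBD02} its two dual arcs are positive-cost self-loops at a face-node, which a min-cost flow never uses; hence for every embedding a bend-minimum drawing places no bends on pendant edges, which yields your key claim combinatorially. Two small corrections: the vertex you call $s_1$ is a cycle vertex of $\eta$ (an endpoint of the gadget bases), not the source of a supplier gadget (that is the vertex labelled $s$ in \Cref{fig:counterexample-a}), though your argument does not actually need the attachment vertex to be a source; and when several pendants share the attachment vertex you should say they are placed at distinct points of the free wedge so as not to cross each other. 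Your bookkeeping ($14k-3\le n\le 14k+10$, hence $2k-2\ge\frac{n-24}{7}$) matches the paper's.
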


\section{Open Problems}\label{se:conclusions}


We conclude by mentioning some open problems that are naturally suggested by the results in this paper.
\begin{itemize}
	\item Is it possible to improve the time complexity stated by \Cref{th:main}?
	\item We showed that every bimodal plane digraph becomes upward planar if every edge is subdivided twice. It would be interesting to minimize the total number of subdivision vertices (with at most two subdivision vertices per edge) such that the resulting graph admits an upward straight-line drawing of polynomial area.
\end{itemize}

%

\bibliography{biblio}
\bibliographystyle{splncs04}

\ifArxiv

\newpage

\appendix

\section*{Appendix}

\section{Additional Material for \Cref{se:combinatoric}}

\lequasitriangulation*
\begin{proof}
	Let $G$ be a bimodal plane digraph.	We prove that until $G$ has a face $f$ whose degree is larger than $4$ or it is not nice, we can add an edge inside $f$ without violating the bimodality.
	If $f$ is a face of degree four that is not nice, then it has an angle $(e_1,e_2)$ at a vertex $w$ that is a non-switch. In this case, it is possible to add an edge between the end-vertices of $e_1$ and $e_2$ different from $w$ without violating the bimodality (see $f_1$ in \Cref{fig:quasi-triangulated-b}).
	Let $f$ be a face of $G$ such that $\delta(f) \geq 5$. If $f$ has an angle $(e_1,e_2)$ at a vertex $w$ that is a non-switch, we can proceed as in the previous case. If all angles of $f$ are source- and sink-switches, since $\delta(f) \geq 5$ there are a source-switch at a vertex $u$ and a sink-switch at a vertex $v$ such that $u \neq v$ and $u$ and $v$ are not adjacent in the boundary of $f$. It follows that it is possible to add edge $(u,v)$ inside $f$ (see $f_3$ in \Cref{fig:quasi-triangulated-b}).
	\qed
\end{proof}


\leperfectmatching*
\begin{proof}
	Let $G$ be a bimodal plane digraph.	Suppose first that $G$ admits an upward consistent assignment. According to this assignment each source or sink vertex is assigned to a face and the number of source and sink vertices assigned to a face is equal to its capacity. Let $v_1,v_2,\dots,v_c$ be the source and sink vertices assigned to a face $f$ (where $c$ is the capacity of $f$). We select the edges $(\ai{f}{i},v_i)$ of $H_{\hat{G}}$ (for $i=1,2,\dots,c$). Since the selected edges are independent, they are part of a matching, and since exactly one edge is selected for each vertex $\ai{f}{i}$, the matching is perfect.
	
	Suppose now that $H_{\hat{G}}$ has a perfect matching. For each face $f$ of $G$ with capacity $c$, let $(\ai{f}{i},v_i)$ (for $i=1,2,\dots,c$) be the edges of $H_{\hat{G}}$ that belong to the perfect matching. We assign vertices $v_1,v_2,\dots,v_c$ to face $f$. Clearly, each source or sink vertex of $G$ is assigned to a face of $G$ and the number of source and sink vertices that are assigned to each face of $G$ is equal to its capacity.\qed
\end{proof}


\section{Additional Material for \Cref{se:unit-flow}}

\leflow*
\begin{proof}
	The flow network $\NB(G)$ is equivalent to a flow network $\NB^+(G)$ obtained as follows. Let $f$ and $g$ be two faces of $G$ that share edges $e_1, e_2,\dots, e_k$, for $k \geq 1$. The face-to-face arc $(f,g)$ of $\NB(G)$ is replaced in $\NB^+(G)$ by a set of face-to-face arcs $a_1, a_2,\dots,a_k$ such that each $a_i$ has unbounded capacity and unit cost. Such a network coincides with $\N(G)$ where all edges have unbounded capacity. Clearly, any feasible flow $\fl_u$ of $\N(G)$ is also a feasible flow of $\NB^+(G)$. As a consequence of~\cite[Lemma 7]{DBLP:journals/algorithmica/BertolazziBD02}, a feasible flow in $\NB^+(G)$ corresponds to a quasi-upward planar drawing such that the number of bends along each edge $e$ of $G$ is equal to the sum of the costs of the flows along the two dual arcs of $e$.\qed  
\end{proof}

\thmain*

\begin{proof}
	To construct the drawing $\Gamma$ we compute a min-cost flow on $\N(G)$. Denote by $n$, $m$, and $f$, the number of vertices, edges and faces of $G$, respectively. Since $G$ is connected and planar, $n \in O(m)$ and $f \in O(m)$. (We remark that although $G$ is planar, in general it is not true that $m \in O(n)$ because $G$ has multiple edges.) Thus, $\N(G)$ has $O(m)$ nodes and arcs, and it can be constructed in $O(m)$ time.  
	Since $\N(G)$ is planar and each edge has unit capacity, we can use the algorithm by Karczmarz and Sankowski~\cite{DBLP:conf/esa/KarczmarzS19}, whose time complexity is ${\tilde{O}}((NM)^{\frac{2}{3}}\log C)$, where $N$ and $M$ are the number of vertices and edges of the flow network, respectively, and $C$ is an upper bound to the edge costs. For our flow network $\N(G)$, we have $N \in O(m)$, $M \in O(m)$ and $C \in O(1)$. Thus, we can compute a min-cost flow on $\N(G)$ in ${\tilde{O}}(m^{\frac{4}{3}})$ time. Once a min-cost flow is computed, a drawing $\Gamma$ of $G$ can be constructed in $O(m+b)$ time~\cite{dett-gd-99}, where $b$ is the total number of bends. We now show that the curve complexity of $\Gamma$ is two and therefore $b \in O(m)$.
	
	By \Cref{le:flow}, the computed flow defines a drawing $\Gamma$ such that the number of bends along an edge $e$ of $G$ is equal to the sum of the costs of the flows along the two face-to-face arcs that are dual of $e$. Since the capacity of the arcs is one, then each arc has at most one unit of flow; since the flow has minimum cost, the dual arcs of an edge $e$ cannot have both one unit of flow. It follows that each edge of $G$ has at most two bends. Moreover, if $G$ is upward planar (i.e., it can be drawn without bends), then the face-to-face arcs have zero unit of flow, because they are the only arcs having an associated cost. So the curve complexity of the computed drawing is minimum. The fact that the flow has minimum cost implies  that $\Gamma$ has the minimum number of bends among all quasi-upward planar drawings with minimum curve complexity. 
	\qed
\end{proof}

\section{Additional Material for \Cref{se:lowerbound}}

\lelowerbound*

\begin{proof}
	For each $k \geq 3$, we construct a graph $G_k$ by suitably combining different copies of two gadgets. The first gadget is shown in \Cref{fig:counterexample-a} and it is called
	\emph{supplier gadget} because it contains one source and one sink vertex, denoted as $s$ and $t$ in \Cref{fig:counterexample-a}, that supply two units of flow. The edge connecting the two vertices denoted as $a$ and $b$ in \Cref{fig:counterexample-a} is called the \emph{base} of the supplier gadget, while the vertex denoted as $c$ is called the \emph{apex} of the supplier gadget. Graph $G_k$ has $k$ copies of the supplier gadget that in total supply a $2k$ units of flow; $G_k$ is such that at least $k-1$ units of this flow have to reach a specific face. To force these $k-1$ units of flow to ``traverse'' the same edge (thus creating $2k-2$ bends along this edge) we use the second gadget, which is shown in \Cref{fig:counterexample-b}. This gadget is called \emph{barrier gadget} because it is used to prevent the flow to traverse some edges. Similar to the supplier gadget, the edge connecting the two vertices denoted as $x$ and $y$ in \Cref{fig:counterexample-b} is called the \emph{base} of the barrier gadget, while the vertex denoted as $z$ is called the \emph{apex} of the barrier gadget.
	

	More precisely, $G_k$ is constructed as follows. See \Cref{fig:counterexample-c} for an illustration. Let $\eta$ be a cycle consisting of $2k$ vertices $s_1$, $t_1$, $s_2$, $t_2$, $\dots$, $s_k$, $t_k$, in this cyclic order (highlighted by bold edges in \Cref{fig:counterexample-c}). The edges of the cycle are oriented so that each $s_i$ is a source and each $t_i$ is a sink.
	The graph $G_k$ contains $k$ copies $S_1,S_2,\dots,S_k$ of the supplier gadget. These copies are arranged so that the base $(a_i,b_i)$ of gadget $S_i$ coincides with edge $(s_i,t_i)$ of cycle $\eta$, for $i=1,2,\dots,k$, and the gadget is embedded inside the cycle $\eta$. The apices $c_1,c_2,\dots,c_k$ of the supplier gadgets are connected with edges $(c_i,c_{i+1})$, for $i=1,2,\dots,k-1$, and edge $(c_1,c_k)$; notice that edge $(c_1,c_k)$ is oppositely oriented with respect to the other edges. The graph $G_k$ has $2k-1$ copies of the barrier gadget; $k$ of these gadgets are denoted $B_1, B_2, \dots, B_{k}$, while the other $k-1$ are denoted as $B'_1, B'_2, \dots, B'_{k-1}$. These copies are arranged so that: (i) the base $(x_i,y_i)$ of the gadget $B_i$ coincides with edge $(s_i,t_i)$ of $\eta$, for $i=1,2,\dots,k$, and the gadget is embedded outside the cycle $\eta$; (ii) the base $(x'_i,y'_i)$ of the gadget $B'_i$ coincides with edge $(s_{i+1},t_i)$ of $\eta$, for $i=1,2,\dots,k-1$, and the gadget is embedded outside the cycle $\eta$. Notice that the gadgets in the second set are mirrored with respect to the ones of the first set (i.e., their bases are oppositely oriented). The apices of these gadgets are connected with edges $(z'_i,z_i)$ and $(z'_i,z_{i+1})$, for $i=1,2,\dots,k-1$. Note that the graph $G_k$ is acyclic.
	
	
	We now prove that every quasi-upward planar drawing of $G_k$ with the minimum number of bends has at least one edge with at least $2k-2$ bends. Let $\Gamma$ be any quasi-upward planar drawing of $G_k$ with the minimum number of bends and let $\psi$ be the planar embedding of $\Gamma$. Drawing $\Gamma$ corresponds to a minimum cost flow on the flow network $\NB(G_k)$ defined on the planar embedding $\psi$~\cite{DBLP:journals/algorithmica/BertolazziBD02}.
	Since $G_k$ is triconnected all its planar embeddings have the same set of faces and each embedding is defined by the choice of a face as the external one. Let $f^*$ be the face that is external in the embedding of $G_k$ shown in \Cref{fig:counterexample-c}. Face $f^*$ has $k$ source-switches, which implies that its capacity is at least $k-1$ in $\NB(G_k)$: Namely, it is $k+1$ if $f^*$ is the external face in $\psi$ and $k-1$ otherwise.
	Since there are one source and one sink in each of the $k$ supplier gadgets, the total amount of flow consumed by the faces is $2k$. In any planar embedding of $G_k$ the cycle $\eta$, highlighted by bold edges in \Cref{fig:counterexample-c}, separates the source and sink vertices from $f^*$.  It follows that at least $k-1$ units of flow must go through the dual arcs of the edges of $\eta$, thus creating at least $2k-2$ bends along the edges of $\eta$. We now prove that all these bends are on the edge $(s_1,t_k)$. Consider a unit flow that goes from a source or a sink node $v$ to $f^*$ following a path $\pi$ in $\NB(G_k)$; the cost of sending this unit of flow along $\pi$ is equal to twice the number of face-to-face arcs in  $\pi$ because face-to-face arcs have cost two. Each face-to-face arc of $\pi$ is the dual arc of an edge in $G_k$ shared by two adjacent faces. It follows that in order to obtain a minimum cost flow, each unit of flow that goes from a source or a sink node $v$ to $f^*$ in $\NB(G_k)$ follows a shortest path in the dual graph of $G_k$ connecting a face incident to $v$ with $f^*$. If $\pi$ in the dual graph of $G_k$ includes a dual arc of an edge $e$, we say that $\pi$ \emph{traverses} $e$. We now show that any shortest path in the dual graph of $G_k$ connecting a face incident to $v$ with $f^*$ traverses edge $(s_1,t_k)$.
	
	Let $v$ be a source or sink node of a supplier gadget $S_i$ different from the source of $S_1$ and from the sink of $S_k$. There exists a path of length $5$ from a face incident to $v$ to $f^*$ that traverses $(s_1,t_k)$ (see the blue path in \Cref{fig:counterexample-c}); such a path traverses two edges of $S_i$, either edge $(c_{i-1},c_i)$ or edge $(c_i,c_{i+1})$, edge $(c_1,c_k)$, and edge $(s_1,t_k)$. If a path from a face incident to $v$ to $f^*$ exits $\eta$ by traversing an edge $e$ different from $(s_1,t_k)$, then it must traverse the barrier gadget whose base coincides with $e$, and thus its length is greater than $5$ (see the red path in \Cref{fig:counterexample-c}).
	Consider now the source of $S_1$ and the sink of $S_k$. In this case, there exists a path of length $3$ from a face incident to these vertices to $f^*$ that traverses $(s_1,t_k)$ (see the green path in \Cref{fig:counterexample-c}); this path traverses two edges of $S_i$ ($i \in \{1,k\}$) and edge $(s_1,t_k)$. Also in this case, any path that exits $\eta$ by traversing an edge different from $(s_1,t_k)$ must traverse a barrier gadget, and thus its length is greater than $3$. From the discussion above, it follows that a flow of minimum cost in $\NB(G_k)$ has at least $k-1$ units of flow that traverse $(s_1,t_k)$.
	%
	\qed
\end{proof}

\thlower*

\begin{proof}
	Given an integer $n \geq 39$, let $k=\left \lfloor \frac{n+3}{14} \right \rfloor \geq 3$. We construct a bimodal planar graph $G$ starting from the graph $G_k$ defined in the proof of \Cref{le:lowerbound} and by adding to it $h=n-(14k-3)$ vertices connecting each of them to $z'_1$ with an edge outgoing from $z'_1$. Notice that $h \leq 13$ and that these vertices are all sinks. It follows that each of them increases by one unit the capacity of the face inside which it is embedded, and it also supplies one unit of flow. Since the length of the shortest path from a source or sink vertex of a supplier gadget to face $f^*$ is smaller that then length of the shortest path from the same source or sink vertex to each of the faces where these additional vertices can be embedded, the argument used in the proof of \Cref{le:lowerbound} still holds. It follows that $G$ has curve complexity at least $2k-2=2(\frac{n-h+3}{14})-2=\frac{n-h-11}{7}$. Since $h \leq 13$, the curve complexity is at least $\frac{n-24}{7}$.\qed
\end{proof}

\fi

\end{document}